\newtheorem{theorem}{Theorem}[section]
\newtheorem{lemma}[theorem]{Lemma}
\newtheorem{corollary}{Corollary}[theorem]
\theoremstyle{definition}
\newtheorem{definition}{Definition}[section]
\DeclareMathOperator{\N}{\mathbb{N}}
\DeclareMathOperator{\C}{\mathbb{C}}
\DeclareMathOperator*{\dg}{\dagger}
\DeclareMathOperator*{\zo}{\{0,1\}}
\DeclareMathOperator*{\poly}{poly} 
\DeclareMathOperator*{\polylog}{polylog} 
\DeclareMathOperator*{\bigO}{\mathcal{O}}
\DeclareMathOperator*{\tbigO}{\tilde{\mathcal{O}}}
\title{On the quantum simulation of complex networks}
\author[1,2]{Duarte Magano}
\author[1,2]{Jo\~{a}o P. Moutinho}
\author[1]{Bruno Coutinho}
\affil[1]{\small Instituto de Telecomunica\c{c}\~{o}es, Portugal}
\affil[2]{\small Instituto Superior T\'{e}cnico, Universidade de Lisboa, Portugal}
\date{24\textsuperscript{th} June 2023}
\begin{document}

\maketitle

\begin{abstract}
	Quantum walks provide a natural framework to approach graph problems with quantum computers, exhibiting speedups over their classical counterparts for tasks such as the search for marked nodes or the prediction of missing links.
	Continuous-time quantum walk algorithms assume that we can simulate the dynamics of quantum systems where the Hamiltonian is given by the adjacency matrix of the graph.
	It is known that such can be simulated efficiently if the underlying graph is row-sparse and efficiently row-computable.
	While this is sufficient for many applications, it limits the applicability for this class of algorithms to study real world complex networks, which, among other properties, are characterized by the existence of a few densely connected nodes, called hubs.
	In other words, complex networks are typically not row-sparse, even though the average connectivity over all nodes can be very small.
	In this work, we extend the state-of-the-art results on quantum simulation to graphs that contain a small number of hubs, but that are otherwise sparse.
	Hopefully, our results may lead to new applications of quantum computing to network science.
\end{abstract}

\section{Introduction}

A network is a set of distinct interacting elements, often represented as a graph. 
Such a general description can be applied to many complex systems from diverse fields such as biology, medicine, technology, finance, sociology and others \cite{Newman2010,barabasi2016network}. 
The modeling of such networks has proven to be of practical interest in understanding these complex systems. 
To give a few examples, today's disease spreading algorithms use social network data to improve their accuracy \cite{PhysRevLett.86.3200,doi:10.1126/science.abb4218}, and the human interactome (the set of all protein-interactions in the human cell) is a valuable tool in the discovery of new cancer genes~\cite{Horn2018} and drug combinations~\cite{Cheng2019}. 
The computational cost of analyzing complex networks is constantly growing, as more data becomes available and finer details are modeled \cite{Kim2022,10.1093/nar/gkv1115}. 
Considering the number of elements alone, complex networks with up to billions of nodes appear in different fields, for example, the neuronal network in the human brain \cite{azevedo2009equal} or the World Wide Web  \cite{WWW}. 
Overall, it is essencial to develop novel and more efficient algorithms for complex network analysis.

With the first working quantum processors on the horizon, it is fair to wonder what may be their applications to network science. 
Quantum computers are already known to have an advantage over classical ones for a number of graph problems. 
During the past twenty-five years quantum algorithms have been put forward for traversing graphs \cite{FarhiGutmann,Childs_Exponential}, computing NAND trees \cite{NANDtrees},  finding marked nodes in graphs \cite{ChildsGoldstone,ApersNovo}, solving optimization problems \cite{Callison_Chancellor_Mintert_Kendon_2019,Marsh_Wang_2020}, and predicting missing links in a network \cite{Moutinho_2021,Moutinho_2022}.
All the aforementioned algorithms are based on continuous-time quantum walks.
That is to say that they rely on the ability to transform quantum states as
\begin{equation} \label{eq:quantumwalk}
	\ket{\psi_0} \rightarrow \ket{\psi(t)} := e^{-iAt}\ket{\psi_0},
\end{equation} 
where $A$ is the adjacency matrix of the underlying graph.
For example, Ref. \cite{Moutinho_2021} implements $e^{i A t} \pm e^{-i A t}$ to encode weighted sums of even/odd powers of the adjacency matrix, which are then used as path-based prediction scores for new links. 
As another example, Ref. \cite{ApersNovo} performs the time evolution $e^{-i A t}$ for an appropriately chosen random time $t$, preparing a mixed state that is later measured to find marked nodes.

Why should we expect to be able to implement the transformation \eqref{eq:quantumwalk}?
Quantum walk algorithms often treat the unitary $e^{-iAt}$ as if it was a native operation of the quantum computer.
But one must consider the actual cost of implementing it on a universal quantum computer.
Childs \textit{et al} \cite{Childs_Exponential}
showed how to simulate a quantum walk on a graph of glued trees (provided with a proper edge coloring).
Modern algorithms on Hamiltonian simulation can implement a quantum walk on any graph whose adjacency matrix is row-sparse\footnote{We follow the computer science convention to define the term ``sparse''. We say that a network is ``sparse'' or ``row-sparse'' if the maximum degree is at most polylogarithmic on the total number of nodes. Nonetheless, it should be noted that the network science community often uses ``sparse'' to designate any network whose average degree is much smaller than the number of nodes.} and efficiently row-computable \cite{ATS03,Berry_2015,LowChuang19}.
Concretely, consider an input model where we can query the entries of the matrix $A$ and of the adjacency list of the graph.
Then, if every node in the graph has degree at most $d$ we can approximate $e^{-iAt}$ up to precision $\epsilon$ using 
\begin{equation} \label{eq:complexity_sparsegraphs}
	\bigO\left(d t + \log(1/\epsilon)\right)
\end{equation}
queries \cite{LowChuang19}.

If we want to run quantum walks on complex networks, these algorithms become inefficient \cite{Moutinho_2022}.
Indeed, a defining characteristic of complex networks, found in networks with completely different origins, is the presence of a heavy-tailed distribution of the degrees of the nodes (such as a power-law, or a log-normal distribution).
That is, the tail of the degrees' distribution is not exponentially bounded, implying a non-negligible probability of observing \emph{hubs}, nodes with a much larger number of connections than the rest of the network.
Clearly, a network with hubs is not sparse.
In these cases, the $d$ factor in expression \eqref{eq:complexity_sparsegraphs} cannot be described as a polylogarithmic function of the size of the network and we lose the notion of efficiency in the simulation of the quantum walk.

In this work, we propose a model of hub-sparse networks, where we add a few densely connected hub-nodes to an otherwise sparse network, and show that hub-sparse networks can be efficiently simulated on a quantum computer. 
While the hub-sparse model is an oversimplification of real networks, we hope that our results may constitute a relevant first step towards understanding whether quantum computers can provide an advantage for simulating complex networks.

The paper is organized as follows.
In section \ref{sec:summary}, we summarize the main ideas of our work: in section \ref{sec:hubsparse} we introduce the hub-sparse model, in section \ref{sec:input_model} we lay down the input model of our problem, and in section \ref{sec:main_results} we state the main theorem and provide a high-level overview of the proof.
In section \ref{sec:preliminaries} we review some background material that may help the reader follow the technical details of our proof.
The full proof of the main theorem is detailed in sections \ref{sec:oracles}--\ref{sec:IntPicture}: section \ref{sec:oracles} introduces new operators that convey information about the graph and explains how they can be built from the input model, section \ref{sec:fastforwarding} shows how to fast-forward the network of connections between hubs and regular connections, and section \ref{sec:IntPicture} presents the final algorithm based on the interaction picture of quantum simulation.
Section \ref{sec:discussion} concludes the paper with a discussion of our results.

\paragraph{Notation}
Whenever talking about graphs, we denote the number of nodes by $N$.
For simplicity, we assume it to be a power of two: $N=:2^n$.
We write $[N] := \{0, 1, \ldots, N-1\}$.
We use $A$ to denote the adjacency matrix of the graph.
The degree of a node is the number of nodes that it connects to.
When we write $\log$, we mean base $2$ logarithm.
We adopt the standard ``big O'' notation for asymptotic upper bounds. 
For two functions $f$ and $g$ from $\N$ to $\N$ we say that $f=\bigO(g)$ if $\exists C, x_0 > 0: \forall x, \left(x > x_0 \implies f(x) < C \cdot g(x) \right)$, and say that $f=\Omega(g)$ if $g = \bigO(f)$.
Finally, we write $f(x) = \polylog(x)$ if $f(x) = \bigO(\log^c(x))$ for some constant $c$. 

\section{Summary \label{sec:summary}}

\subsection{Hub-sparse networks \label{sec:hubsparse}}

In the context of quantum computing, ``Hamiltonian simulation'' refers to the following problem.
\begin{definition}[Hamiltonian simulation\label{def:Hamsimulation}]
	Given a description of a $n$-qubit Hamiltonian $H(\tau)$, an evolution time $t$, an initial state $\ket{\psi_0}$, and a precision $\epsilon > 0$, implement a unitary $U_{t}$ such that 
	\begin{equation}
		\Vert U_{t} \ket{\psi_0} - \ket{\psi(t)} \Vert \leq \epsilon,
	\end{equation}
	where $\ket{\psi(t)}$ is the solution of the differential equation
	\begin{equation}
		i \partial_{\tau} \ket{\psi(\tau)} = H(\tau) \ket{\psi(\tau)}
	\end{equation}
	at $\tau = t$ subject to $\ket{\psi(0)} = \ket{\psi_0}$.
	If $H$ is time-independent, then we just want to approximate $\exp(-i H t)$.
	We say that a Hamiltonian simulation algorithm is \emph{efficient} if its complexity is polynomial in $n$.
\end{definition}

An important sub-class of Hamiltonians are Hermitian matrices with binary entries, as they can be interpreted as adjacency matrices of graphs.
Simulating a $N$-node graph $G$ means solving Hamiltonian simulation with the $\lceil \log(N) \rceil$-qubit Hamiltonian corresponding to the adjacency matrix of the graph.

The previous literature on quantum simulation algorithms has worked with \emph{sparse} graphs, for which it has been shown that there are efficient Hamiltonian simulation algorithms \cite{ATS03,Berry_2015,LowChuang17}.
A graph is said to be $d$-sparse if each node connects to at most $d$ other nodes.
By sparse network we mean that the underlying graph is $\polylog(N)$-sparse.
This does not capture the essence of complex networks due to the existence of densely connected hub-nodes.

In this work, we introduce a broader class of networks that we refer to as $\emph{hub-sparse}$ networks.
These have a bimodal character.
Most of the nodes connect only to a few other nodes, but a small fraction of nodes connects to almost the entire network.
In other words, we are adding hubs to networks that are otherwise sparse.
See Figure \ref{fig:hubsparse}a for an illustration of this class of networks.
We provide a rigorous definition below.

\begin{figure}[t] 
\centering
\includegraphics[width=\linewidth]{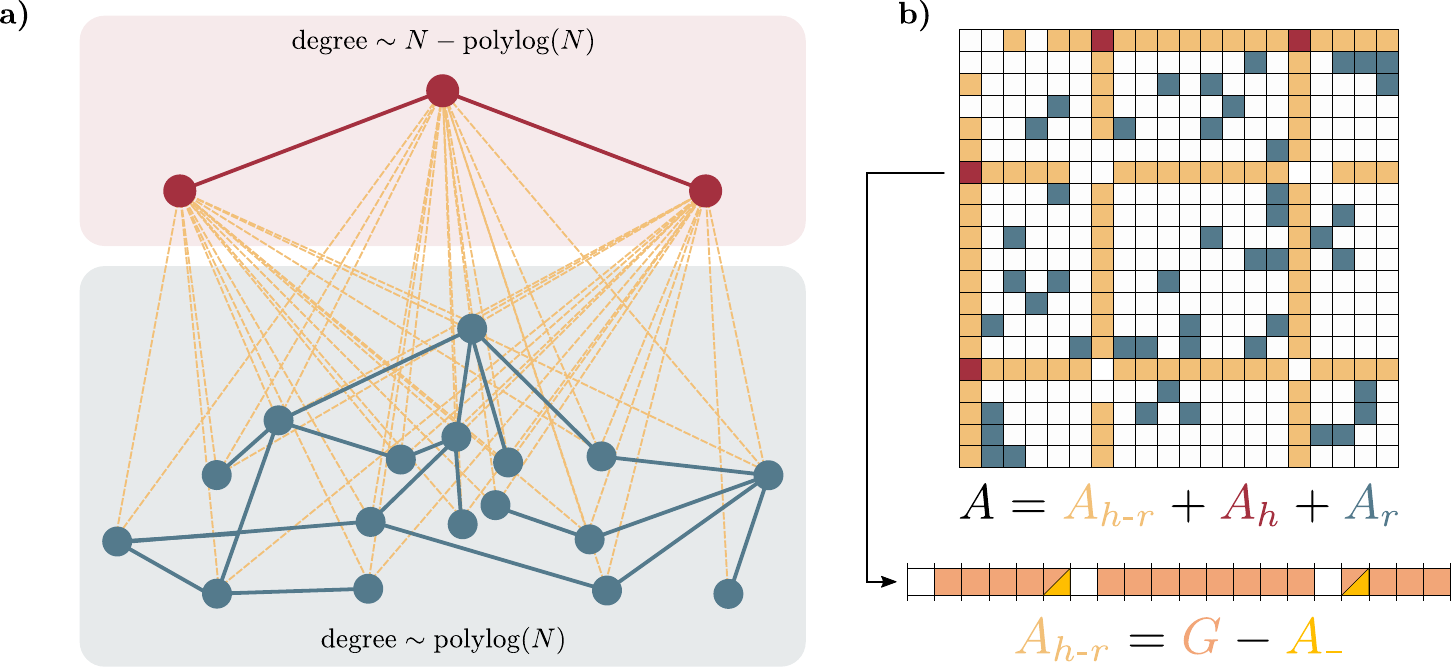}
\caption{Hub-sparse networks.
In panel $a)$ we show an example of a hub-sparse network with twenty nodes. 
There is a small number of hubs (red) that are connected to most of the network, while the regular nodes (blue) only connect to a few other nodes.
In general, a hub-sparse network contains $\polylog N$ hub nodes with degree $\Omega(N - \polylog N)$, the remaining nodes having degree $\polylog N$ -- Definition \ref{def:hub-sparse}
In panel $b)$ we plot the corresponding adjacency matrix, $A$.
In our algorithm, we split $A$ into $G$, $A_-$, $A_h$, and $A_r$ (\textit{cf.} section \ref{sec:summary}).
$G$ is the matrix of all possible links between hubs and regular nodes (equation \eqref{eq:Gdef}).
$A_-$ is the matrix of links between hubs and regular nodes that are not present in the network (equation \eqref{eq:Aminusdef}) and
$A_h$ and $A_r$ are the matrices of connections between hubs and regular nodes, respectively.
Overall, we can write $A = G - A_- + A_h + A_r$.
}
\label{fig:hubsparse}
\end{figure}

\begin{definition}[Hub-sparse networks] \label{def:hub-sparse}
	A graph $G = (V, E)$ with $N$ nodes is hub-sparse if there are $M, h, s = \bigO(\polylog(N))$ and
	a bipartition of the nodes $\{V_h, V_r\}$ such that
	\begin{enumerate}
		\item $\vert V_h \vert = M$ (and so $\vert V_r \vert = N - M$),
		\item every node in $V_h$ connects to at least $N - h$ other nodes,
		\item every node in $V_r$ connects to at most $s$ other nodes.
	\end{enumerate}
	We say that a given node is a hub if its degree is $\Omega(N - \polylog(N))$, and that it is a regular node if its degree is $\bigO(\polylog(N))$.
\end{definition}

Unless otherwise stated, from now on we will always assume that we are working with hub-sparse networks.
We will also assume that we know the values of $M$, $h$, and $s$, as in Definition \ref{def:hub-sparse}.
For simplicity, we assume that these are powers of two (although not much changes otherwise).

\subsection{Input model \label{sec:input_model}}

Like in the works on the simulation of sparse Hamiltonians, we assume that we can quickly recognize and locate the non-zero entries of the adjacency matrix of the graph.
Specifically, denoting $A$ as the adjacency matrix of the graph, one assumes access to a quantum oracle $O_A$ acting as 
\begin{equation}
	O_A  \ket{i, j, z}   = \ket{i, j, z \oplus A_{ij}} \label{eq:MatrixOracle},
\end{equation}
with $i, j \in [N]$ and $z \in \zo$; as well as to another oracle $O_L$ such that
\begin{equation}
	O_L  \ket{i, l}   = \ket{i, r(l, i)}, \label{eq:ListOracle}
\end{equation}
with $i \in [N]$ and $r(l,i)$ being the position of the $l$-th non-zero entry in $i$-row  of $A$. If there are less than $l$ non-zero entries in the $i$-th row, we assume that $r(l, i)$ contains a flag indicating so.
We also assume that we can quickly find the hubs. 
We express this via an oracle $O_H$ acting as
\begin{equation}
	O_H  \ket{l}   = \ket{h(l)} \label{eq:HubOracle}
\end{equation}
where $h(l)$ is the $l$-th hub in the network\footnote{Such a unitary exists because $h(l)$ is one-to-one. We could have postulated a different oracle $O'_H$ acting as $O'_H  \ket{l}\ket{z}   = \ket{l}\ket{z \oplus h(l)}$. $O'_H$ can be straightforwardly simulated by $O_H$. Choosing $O_H$ or $O'_H$ for the input model does not change the main conclusions of the paper.}.

We also assume access to the controlled versions of the oracles $O_A$, $O_L$, and $O_H$ and their inverses.

\subsection{Main results \label{sec:main_results}}

Our main result is an extension of previous results on efficient Hamiltonian simulation from sparse to hub-sparse networks.

\begin{theorem} \label{thm:main}
	Let $A \in \zo^{N \times N}$ be the adjacency matrix of a hub-sparse network.
	Let $t, \epsilon > 0$.
	Then, there is a quantum algorithm that prepares $\exp(-i A t)$ with precision $\epsilon$ and error probability $\bigO(\epsilon)$ with
	\begin{equation}
		\bigO \left( t \, \polylog (t / \epsilon, N) \right)
	\end{equation}
	calls to the controlled and inverse versions of the oracles $O_A$, $O_L$, and $O_H$,  and primitive two-qubit gates.
\end{theorem}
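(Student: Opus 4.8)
\section*{Proof proposal}

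The plan is to decompose the adjacency matrix as $A = G - A_- + A_h + A_r$ (as in Figure \ref{fig:hubsparse}), separating $A$ into a part with large spectral norm but trivial structure, $G$, and a part $B := -A_- + A_h + A_r$ that is sparse. The crucial observation is that $G$, the complete bipartite coupling between the $M$ hubs and the $N-M$ regular nodes, is a rank-two operator: writing $\ket{u} = \sum_{i \in V_h} \ket{i}$ and $\ket{v} = \sum_{j \in V_r} \ket{j}$ for the (unnormalized) indicator states of the two parts, we have $G = \ket{u}\bra{v} + \ket{v}\bra{u}$. Since $\ket{u}$ and $\ket{v}$ have disjoint support, $G$ acts on the two-dimensional subspace they span as $\sqrt{M(N-M)}\,\sigma_x$ and vanishes on the orthogonal complement. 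The remaining term $B$ inherits sparsity from each of its three constituents --- $A_r$ is $s$-sparse, $A_h$ is supported on the $M \times M$ hub block, and $A_-$ has at most $h$ nonzero entries per hub row and at most $M$ per regular row, with all three having disjoint support --- so $B$ is $\polylog(N)$-sparse and hence $\norm{B} = \polylog(N)$.

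First I would show that $G$ can be \emph{fast-forwarded}, i.e.\ that $e^{-iGt}$ can be implemented with cost independent of $t$. Using $O_H$ (and its controlled and inverse versions) one prepares the normalized states $\ket{\hat u}$ and $\ket{\hat v}$ spanning the relevant subspace: $\ket{\hat u}$ follows directly from applying $O_H$ to the uniform superposition over $M$ indices, and $\ket{\hat v}$ from projecting the uniform state off $\ket{\hat u}$, which succeeds with probability $1 - M/N \approx 1$. Within this subspace $e^{-iGt}$ reduces to the single rotation $\cos(\omega t)\,I - i\sin(\omega t)\,\sigma_x$ with the classically computable angle $\omega t$, $\omega = \sqrt{M(N-M)}$. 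Embedding this rotation back into the $N$-dimensional space costs only $\bigO(\polylog N)$ gates and oracle calls and, crucially, does not grow with $t$ or with $\omega \sim \sqrt{N}$.

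The heart of the argument is to avoid paying for the $\Omega(\sqrt N)$ norm of $G$ by passing to the interaction picture with respect to $G$. Writing $\ket{\psi_I(\tau)} = e^{iG\tau}\ket{\psi(\tau)}$, the state obeys $i\,\partial_\tau \ket{\psi_I(\tau)} = B_I(\tau)\ket{\psi_I(\tau)}$ with time-dependent generator $B_I(\tau) = e^{iG\tau} B\, e^{-iG\tau}$, and one recovers $\ket{\psi(t)} = e^{-iGt}\ket{\psi_I(t)}$. Because conjugation by a unitary preserves the spectral norm, $\norm{B_I(\tau)} = \norm{B} = \polylog(N)$ for every $\tau$, so the effective generator of the interaction-picture dynamics is weak. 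I would block-encode $B$ from the sparse-access oracles $O_A$, $O_L$, and $O_H$ --- combining block-encodings of $A_r$, $A_h$, and $-A_-$ by a linear combination of block-encodings, where the entries of $A_-$ are read off by querying $O_A$ on the hub-regular pairs located via $O_H$ --- and then obtain a block-encoding of $B_I(\tau)$ by sandwiching the block-encoding of $B$ between the fast-forwarded $e^{\pm iG\tau}$ of the previous step. Feeding this time-dependent block-encoding into a truncated-Dyson-series (interaction-picture) simulation algorithm simulates $\ket{\psi_I(t)}$ to precision $\epsilon$.

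I expect the main obstacle to be the cost accounting of the interaction-picture algorithm: one must verify that its query complexity scales with $\norm{B}\,t = \polylog(N)\cdot t$ rather than with $\norm{G}\,t$, and that each query to $B_I(\tau)$ --- which internally calls the fast-forwarding of $G$ at the relevant time --- contributes only a $\polylog(t/\epsilon, N)$ overhead. Collecting the $\bigO(\polylog N)$ cost per oracle call for building the block-encodings, the $t$-independent cost of fast-forwarding $G$, and the $\polylog(t/\epsilon, N)$ overhead of the Dyson-series routine, and finally applying $e^{-iGt}$ once at the end to return from the interaction picture, yields the claimed $\bigO\!\left(t\,\polylog(t/\epsilon, N)\right)$ total, with the stated $\bigO(\epsilon)$ failure probability arising from the heralded ancilla measurements of the block-encoding and amplification steps.
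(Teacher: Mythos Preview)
Your proposal is correct and follows essentially the same strategy as the paper: the decomposition $A = G + B$ with $B = -A_- + A_h + A_r$, the rank-two structure of $G$ (equivalently, the paper's eigenvectors $\ket{\Psi_\pm} = (\ket{\hat u} \pm \ket{\hat v})/\sqrt{2}$) enabling $t$-independent fast-forwarding, the passage to the interaction picture, and simulation of the resulting time-dependent Hamiltonian via the truncated Dyson series of Low--Wiebe. The one point the paper handles more carefully than your sketch is the block-encoding of $A_-$: to enumerate the $\polylog(N)$ nonzero entries in a hub row of $A_-$ (the \emph{missing} links), the paper builds an auxiliary oracle $O_Z$ from $O_L$ via binary search, whereas your phrase ``querying $O_A$ on the hub-regular pairs located via $O_H$'' does not by itself give sparse access to those entries.
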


We now provide a high-level overview of our proof. 
Some of the underlying concepts (block-encodings, fast-forwarding, linear combination of unitaries, time-dependent Hamiltonian simulation) are introduced in the \hyperref[sec:preliminaries]{Preliminaries} section.
The full demonstration of Theorem \ref{thm:main} is detailed in Sections \ref{sec:oracles}--\ref{sec:IntPicture}.

Hamiltonian simulation is an eigenvalue transformation problem.
By the Quantum Singular Value Transformation Theory \cite{Gilyen2019}, if we can efficiently block-encode $H/\alpha$, then we can implement $\exp(-itH)$ up to constant error in  $\bigO(t \alpha)$ time. 
A fundamental limitation of this technique is that we can never block-encode $H/\alpha$ with $\alpha < \Vert H \Vert$.
The spectral norm of the adjacency matrix of a hub-sparse network scales as $\Vert A \Vert \sim \sqrt{N}$.
That is, the cost associated with the block-encoding is exponential in $n = \log(N)$, and there is no hope of obtaining an efficient simulation algorithm this way.

We overcome this challenge by splitting the graph into simpler parts.
Let $G$ be the matrix of every possible link between the set of hubs and the set of regular nodes for a given graph,
\begin{equation} \label{eq:Gdef}
	G_{ij} := 
	\begin{cases}
		1, \text{ if ($i$ is hub) XOR ($j$ is hub),} \\
		0, \text{ otherwise.}
	\end{cases}
\end{equation}
Then, the matrix of observed links between hubs and regular nodes in the graph is $A_{h\text{-}r} := G \odot A$, where $\odot$ is the Hadamard (or element-wise) product.
Similarly, we can define $A_-$ as the matrix of links between hubs and regular nodes that are not in $A_{h\text{-}r}$, i.e., not observed in the graph,
\begin{equation} \label{eq:Aminusdef}
	A_- := G - A_{h\text{-}r}.
\end{equation}
Finally, we can define $A_h$ and $A_r$ as the matrices of hub-hub and regular-regular connections, respectively, and write
\begin{equation}
	A - A_{h\text{-}r} = A_h + A_r.
\end{equation}
In summary, by decomposing $A$ into these various matrices, the problem that we want to solve is 
\begin{equation}
	i \partial_{\tau} \ket{\psi(\tau)} = \big(G - A_- + A_h + A_r \big) \ket{\psi(\tau)}, \, \forall \tau \in [0, t].
	\label{eq:schrodinger_split}
\end{equation}
We refer to Figure \ref{fig:hubsparse}b for a visualization of this splitting.

The three matrices $A_-, A_h, A_r$ are sparse, and their spectral norms are in $\polylog(N)$.
However, the norm of $G$ is $\bigO(\sqrt{N})$, and so it cannot be efficiently block-encoded.
Fortunately, $G$ has enough structure that it can be fast-forwarded, i.e., the complexity of its simulation scales better than $\Vert Gt \Vert$. To do so, we show that $G$ has only two eigenvectors, say $\ket{\Psi_{\pm}}$, with eigenvalues different from zero, $\lambda_{\pm}$.
Therefore, 
\begin{equation}
	e^{-i G t} = (e^{- i \lambda_+} - 1) \dyad{\Psi_+} + (e^{- i \lambda_-} - 1) \dyad{\Psi_-} + I.
\end{equation}
We show that the states $\ket{\Psi_{\pm}}$ can be prepared efficiently, allowing the simulation of $\exp(-iGt)$ in $\polylog(N)$ time.

Following Low and Wiebe's interaction picture simulation method \cite{IntPicture}, we rotate the system to the ``interaction basis''.
In the rotating frame, the time evolution equation \eqref{eq:schrodinger_split} reads
\begin{align} 
	i \partial_{\tau} \ket{\tilde{\psi}(\tau)} = \big(&- \tilde{A}_-(t) +  \tilde{A}_h(t) + \tilde{A}_r(t) \big) \ket{\tilde{\psi}(\tau)},
	\label{eq:HamIntPicture}
	\\
	\text{where } \ket{\tilde{\psi}(\tau)} &= e^{i G \tau} \ket{\psi(\tau)} 
	\label{eq:basisrotation} ,
	\\
	\tilde{A}_i(\tau) &= e^{i G \tau} A_i e^{-i G \tau}, \, i=-,h,r.
\end{align}
That is, we need to solve a time-dependent Schr\"{o}dinger equation.
The advantage is that now the spectral norm of the time-dependent Hamiltonian, $- \tilde{A}_-(t) + \tilde{A}_h(t) + \tilde{A}_r(t)$, is $\polylog(N)$.
Then, we can apply Low and Wiebe's algorithm for time-dependent Hamiltonian simulation, which is based on truncated Dyson series \cite{IntPicture}.
The only requirement left to fulfill is to block-encode the time-dependent Hamiltonian.
We show how to do this for each of the three terms separately, and then combine them with the Linear Combination of Unitaries technique \cite{Berry_2015}.
Having solved \eqref{eq:HamIntPicture}, we can invert equation \eqref{eq:basisrotation} to rotate back to the original basis, effectively retrieving $\ket{\psi(t)}$, and thus efficiently solving the quantum simulation of $A$.

\section{Preliminaries \label{sec:preliminaries}}

To facilitate the reading of our work, in this section we present a summary of well-known concepts and techniques in the literature of quantum algorithms that are essential to understanding our constructions.
The proofs of the related theorems can be found in the original papers.

\subsection{Block-encodings \label{sec:block-encoding}}

Block-encodings are arguably the most natural way to handle non-unitary operations with quantum circuits \cite{BEncoding,Gilyen2019}.
Given an $n$-qubit matrix A, we consider a $(n+m)$-qubit unitary $U_A$ such that
\begin{equation}
	U_A =
	\frac{1}{\alpha}
	\begin{bmatrix}
	A & \cdot \\
	\cdot & \cdot
	\end{bmatrix},
\end{equation}
with the $\cdot$ denoting arbitrary matrices.
That is, $U_A$ acts as $A$ (up to a normalization factor $\alpha$) controlled on the $m$-qubit ancillary system reading $0$,
\begin{equation}
	U_A \ket{0^m}\ket{\psi} = \ket{0^m} \left(\frac{A}{\alpha} \ket{\psi}\right)  +  \ket{\perp}\ket{\psi'},
\end{equation}
where $\bra{0^m}\ket{\perp}=0$.
More formally,
\begin{definition}
	Let $A$ be a n-qubit matrix and $\alpha , \epsilon>0$.
	A $(m + n)$-qubit unitary $U_A$ is a $(\alpha, m, \epsilon)$-block-encoding of $A$ if
	\begin{equation}
	\Vert A - \alpha ( \bra{0^m} \otimes I_n) U_A ( \ket{0^m} \otimes I_n) \Vert \leq \epsilon
	\end{equation}
	We write $(\alpha, m)$-block-encoding as a shortcut for $(\alpha, m, 0)$-block-encoding.
\end{definition}

\subsection{Linear combination of unitaries}

In various situations, it is useful to prepare linear combinations of a set of unitaries $\{U_i\}_{i=0}^{K-1}$,
\begin{equation} \label{eq:LCU}
	T = \sum_{i=0}^{K-1} y_i U_i.
\end{equation}

We define two types of operators for this problem:
\begin{enumerate}
\item ``Select oracle'', $U$, such that
\begin{equation}
	U := \sum_{i=0}^{K-1} \dyad{i} \otimes U_i.
\end{equation}
This operator implements $U_i$ controlled on the $k$ ancilla qubits reading $\ket{i}$.

\item ``Prepare oracle'', $V$, such that
\begin{equation}
	V \ket{0^k} := \frac{1}{\sqrt{\Vert \vb{y} \Vert_1}} \sum_{i=0}^{K-1} \sqrt{y_i} \ket{i}.
\end{equation}	
If the coefficients $y_i$ are not just positive real numbers, then we also need an operator $V'$ such that
\begin{equation}
	V'^{\dg} \ket{0^k} := \frac{1}{\sqrt{\Vert \vb{y} \Vert_1}} \sum_{i=0}^{K-1} \sqrt{y_i}^{*} \ket{i}.
\end{equation}
Note that if the coefficients are real and positive, then $V' = V^{\dg}$ suffices.
\end{enumerate}

\begin{theorem}[Linear combination of unitaries \cite{Berry_2015,Berry_Childs_Kothari_2015,Childs_Kothari_Somma_2017}] \label{thm:LCU}
	Let $\vb{y} = ( y_1 ,  y_2 , \ldots,  y_K )$ and let $\{U_i\}_{i=0}^K$ be $n$-qubit unitaries.
	Then,
	\begin{equation}
		W = (V' \otimes I_n) \cdot U \cdot (V \otimes I_n)
	\end{equation}
	is a $(\Vert \vb{y} \Vert_1, \lceil \log K \rceil)$-block encoding of $T$.
\end{theorem}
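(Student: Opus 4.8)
The plan is to verify the two defining ingredients of a block-encoding directly: that $W$ is a genuine unitary acting on $\lceil \log K \rceil + n$ qubits, and that its top-left block coincides exactly with $T / \Vert \vb{y} \Vert_1$. The first is immediate, since $W$ is a product of the unitaries $V' \otimes I_n$, $U$, and $V \otimes I_n$, and $k := \lceil \log K \rceil$ ancilla qubits suffice to index the $K$ terms; because the claimed error is $0$, essentially all the work lies in the second ingredient, namely computing the compressed operator $(\bra{0^k} \otimes I_n)\, W\, (\ket{0^k} \otimes I_n)$ and showing it equals $T/\Vert \vb{y}\Vert_1$.

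To do so, I would track the action of $W$ on an arbitrary input $\ket{0^k} \otimes \ket{\psi}$ stage by stage. Applying $V \otimes I_n$ first prepares the weighted superposition $\frac{1}{\sqrt{\Vert \vb{y}\Vert_1}} \sum_i \sqrt{y_i}\, \ket{i} \otimes \ket{\psi}$ on the ancilla register. The select oracle $U = \sum_i \dyad{i} \otimes U_i$ then attaches the correct unitary to each branch, producing $\frac{1}{\sqrt{\Vert\vb{y}\Vert_1}} \sum_i \sqrt{y_i}\, \ket{i} \otimes U_i \ket{\psi}$. Finally, applying $V' \otimes I_n$ and projecting the ancilla back onto $\ket{0^k}$ extracts the scalar amplitude $\bra{0^k} V' \ket{i}$ from each branch, so that the compressed block sends $\ket{\psi}$ to $\frac{1}{\sqrt{\Vert\vb{y}\Vert_1}} \sum_i \sqrt{y_i}\, \bra{0^k} V' \ket{i}\; U_i \ket{\psi}$.

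The one point demanding care -- and the reason the separate operator $V'$ is introduced rather than simply reusing $V^{\dg}$ -- is the recombination of the complex amplitudes. From the defining relation $V'^{\dg}\ket{0^k} = \frac{1}{\sqrt{\Vert\vb{y}\Vert_1}} \sum_i \sqrt{y_i}^{*}\ket{i}$, taking the adjoint gives $\bra{0^k} V' = \frac{1}{\sqrt{\Vert\vb{y}\Vert_1}} \sum_i \sqrt{y_i}\, \bra{i}$, hence $\bra{0^k} V' \ket{i} = \sqrt{y_i}/\sqrt{\Vert\vb{y}\Vert_1}$. Substituting this above, the two factors of $\sqrt{y_i}$ multiply to $y_i$ (and not $\vert y_i \vert$, which is exactly what a naive choice $V' = V^{\dg}$ would have produced for non-positive coefficients), so the compressed block equals $\frac{1}{\Vert\vb{y}\Vert_1}\sum_i y_i U_i = T/\Vert\vb{y}\Vert_1$. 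Since this identity holds for every $\ket{\psi}$ and with no approximation, $W$ is a $(\Vert\vb{y}\Vert_1, \lceil \log K\rceil, 0)$-block-encoding of $T$, as claimed. I expect no genuine obstacle here beyond this phase-tracking bookkeeping; the result is a direct unwinding of the definitions.
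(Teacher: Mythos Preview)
Your argument is correct and is the standard direct verification of the LCU block-encoding: apply $V$, then the select oracle, then read off the $\ket{0^k}$ component via $\bra{0^k}V'$, using the definition of $V'$ to recover the coefficient $y_i$ rather than $|y_i|$. There is nothing to compare against here, since the paper does not supply its own proof of this theorem; it is stated in the Preliminaries as a known result with citations to \cite{Berry_2015,Berry_Childs_Kothari_2015,Childs_Kothari_Somma_2017}, and the surrounding text explicitly notes that ``the proofs of the related theorems can be found in the original papers.'' Your write-up would serve perfectly well as the omitted verification.
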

\noindent
In particular, if $U_i$ are $(\alpha_i, m)$-block-encoding of matrices $A_i$, we can prepare a block encoding of $\sum_{j=0}^k y_j A_j$ using the same method by replacing $y_i$ by $y_i \alpha_i$ in equation \eqref{eq:LCU}.
\begin{corollary} \label{thm:LCU_blocks}
	Let $\vb{y} = ( y_1 ,  y_2 , \ldots,  y_k )$ and $\boldsymbol\alpha = (\alpha_1, \alpha_2, \ldots, \alpha_k)$.
	Let $T = \sum_{j=0}^k y_j A_j$, and let $U_j$ be $(a_j, m, \epsilon)$-block-encodings of $A_j$ for each $j$.
	Then, there is a $(\Vert \vb{y} \odot \boldsymbol\alpha \Vert_1, m + \lceil \log K \rceil, \Vert \vb{y} \Vert_1 \epsilon)$-block-encoding of $T$ requiring $1$ call of the select oracle and one call to each prepare oracle.
\end{corollary}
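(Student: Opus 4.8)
The plan is to reduce the statement to the Linear Combination of Unitaries result of Theorem~\ref{thm:LCU} by a suitable rescaling of the coefficients, treating the inner block-encoding ancillas and the true system together as the single register on which the LCU unitaries act. Concretely, I would define rescaled weights $c_j := y_j \alpha_j$, so that $\vb{c} = \vb{y} \odot \boldsymbol\alpha$, and feed the $U_j$ --- which are now genuine $(m+n)$-qubit unitaries, not merely block-encodings --- into the LCU machinery with coefficient vector $\vb{c}$. The prepare oracles $V, V'$ act on the $\lceil \log K \rceil$ selection ancillas and encode the amplitudes $\sqrt{c_j}$ and $\sqrt{c_j}^{*}$, while the select oracle is $U = \sum_j \dyad{j} \otimes U_j$; this is exactly the cost quoted in the statement, one select call and one call to each prepare oracle.

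I would then invoke Theorem~\ref{thm:LCU} verbatim, with the understanding that its ``system'' register is the $(m+n)$-qubit register comprising the inner block-encoding ancillas and the actual system. This yields a $(\Vert \vb{c} \Vert_1, \lceil \log K \rceil)$-block-encoding $W = (V' \otimes I_{m+n}) \cdot U \cdot (V \otimes I_{m+n})$ of the operator $\sum_j c_j U_j$, and this encoding is exact (no error), since the $U_j$ are exact unitaries. The key remaining move is to peel off the inner block-encoding layer: projecting the $m$ block-encoding ancillas onto $\ket{0^m}$ and using linearity gives
\begin{equation}
    \bigl(\bra{0^m} \otimes I_n\bigr) \Bigl( \tfrac{1}{\Vert \vb{c} \Vert_1} \textstyle\sum_j c_j U_j \Bigr) \bigl(\ket{0^m} \otimes I_n\bigr) = \tfrac{1}{\Vert \vb{c} \Vert_1} \sum_j c_j \bigl(\bra{0^m} \otimes I_n\bigr) U_j \bigl(\ket{0^m} \otimes I_n\bigr),
\end{equation}
so that the combined projection onto $\ket{0^{m+\lceil \log K \rceil}}$ recovers $T = \sum_j y_j A_j$ up to the inner block-encoding errors. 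Hence $W$ is itself a block-encoding of $T$ on $m + \lceil \log K \rceil$ ancillas with normalization $\Vert \vb{c} \Vert_1 = \Vert \vb{y} \odot \boldsymbol\alpha \Vert_1$, matching the first two parameters.

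Finally, to pin down the error budget I would bound
\begin{align}
    \Bigl\Vert T - \sum_j c_j \bigl(\bra{0^m} \otimes I_n\bigr) U_j \bigl(\ket{0^m} \otimes I_n\bigr) \Bigr\Vert
    &= \Bigl\Vert \sum_j y_j \Bigl( A_j - \alpha_j \bigl(\bra{0^m} \otimes I_n\bigr) U_j \bigl(\ket{0^m} \otimes I_n\bigr) \Bigr) \Bigr\Vert \nonumber \\
    &\leq \sum_j \vert y_j \vert \, \epsilon = \Vert \vb{y} \Vert_1\, \epsilon,
\end{align}
where the inequality is the triangle inequality applied term-by-term together with the defining bound $\Vert A_j - \alpha_j (\bra{0^m} \otimes I_n) U_j (\ket{0^m} \otimes I_n) \Vert \leq \epsilon$ of each $(\alpha_j, m, \epsilon)$-block-encoding. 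Crucially the factor $\alpha_j$ from the rescaling cancels the $\alpha_j^{-1}$ hidden inside the block-encoding, so the error is weighted by $\vert y_j \vert$ rather than by $\vert c_j \vert$, giving the advertised $\Vert \vb{y} \Vert_1 \epsilon$. I expect the only genuine obstacle to be bookkeeping rather than mathematics: one must keep the two ancilla registers cleanly separated and verify that applying Theorem~\ref{thm:LCU} on the enlarged system register is legitimate, i.e.\ that the outer LCU projection onto $\ket{0^{\lceil \log K \rceil}}$ and the inner block-encoding projection onto $\ket{0^m}$ commute and compose into a single projection onto $\ket{0^{m+\lceil \log K \rceil}}$. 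Once that is settled, the error propagation is a one-line triangle inequality and the oracle count is read directly off the LCU construction.
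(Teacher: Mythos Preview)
Your proposal is correct and follows exactly the approach the paper indicates: the paper's only justification for this corollary is the sentence immediately preceding it, namely that one applies Theorem~\ref{thm:LCU} after replacing $y_i$ by $y_i \alpha_i$, which is precisely your rescaling $c_j = y_j \alpha_j$. Your additional care in separating the two ancilla registers and deriving the $\Vert \vb{y} \Vert_1 \epsilon$ error via the triangle inequality simply fills in details the paper leaves implicit.
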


\subsection{Fast-forwarding}

Suppose that we have access to a unitary $U_H$ that acts as an $(\alpha, m)$-block-encoding of an hermitian matrix $H$.
The Hamiltonian simulation of $H$ can be viewed as a tranformation of the eigenvalues of $H$ from $\lambda$ to $\exp(-i \lambda t)$, and this mapping can be approximated up to desired precision by a polynomial .
With a technique known as qubitization, we can manipulate the eigenvalues to effectively reach any polynomial transformation on $H$.
This idea leads to Hamiltonian simulation algorithms that call $U_H$ $\tbigO(\alpha \vert t \vert)$ times \cite{LowChuang19,Gilyen2019}.

An inherent limitation of this approach is that the simulation time of $H$ always scales as $\sim \alpha$, which is lower bounded by the spectral norm of $H$.
While qubitization algorithms are provenly optimal for sparse Hamiltonians, they may be inneficient for other classes of Hamiltonians.
For example, consider the Hamiltonian $H = c I$, where $I$ is the identity matrix and $c$ is an arbitrary constant.
There is a trivial circuit that simulates $H$ with a single gate, despite the spectral norm of $H$ being $c$.
Whenever there is an algorithm that prepares $\exp(-i H t)$ faster than $\Vert H t \Vert$, we say that $H$ is \emph{fast-forwardable} \cite{FastForward}.

\subsection{Hamiltonian simulation in the interaction picture} \label{sec:pre_intpicture}

Suppose we want to solve an Hamiltonian simulation problem with $H = H_1 + H_2 \in \C^{N \times N}$, where $\Vert H_1 \Vert = \Omega(\poly N)$ and $\Vert H_2 \Vert = \bigO(\polylog N)$.
Clearly, $\Vert H \Vert \sim \poly N$, and so we cannot efficiently simulate $H$ with qubitization.
But suppose that $H_1$ is fast-forwardable.
Then, we can efficiently perform the transformations
\begin{align}
	\ket{\psi(\tau)} \rightarrow & \ket{\tilde{\psi}(\tau)} = e^{i H_1 \tau} \ket{\psi(\tau)}  \\
	\text{and } H_2 \rightarrow & \tilde{H}_2(\tau) = e^{i H_1 \tau} H_2 e^{-i H_1 \tau}.
\end{align}
The Schr\"{o}dinger equation in the rotated frame reads
\begin{equation}
	i \partial_{\tau} \ket{\tilde{\psi}(\tau)} = \tilde{H}_2 (\tau) \ket{\tilde{\psi}(\tau)}.
\end{equation}
That is, now we need to solve a time-dependent Hamiltonian simulation problem, but with the key advantage that now the spectral norm of the (time-dependent) Hamiltonian is in $\bigO(\polylog N)$. 

This problem was solved by Low and Wiebe \cite{IntPicture}.
The idea starts by expanding the time evolution operator, call it $\mathcal{U}(t)$, into a Dyson series,
\begin{align}
		\mathcal{U}(t) = &I
		- i \int_0^t \dd t_1 \tilde{H}_2(t_1) 
		- \int_0^t \dd t_2 \int_0^{t_2} \dd t_1 \tilde{H}_2(t_2) \tilde{H}_2(t_1) 
		\\&+ i \int_0^t \dd t_3 \int_0^{t_3} \dd t_2  \int_0^{t_2} \dd t_1 \tilde{H}_2(t_3) \tilde{H}_2(t_2) \tilde{H}_2(t_1)
		+ \ldots \nonumber 
\end{align}
Then, we truncate the series to the first $K$ terms and approximate each term by discretizing the time domain into $D$ steps,
\begin{gather}
	C_k := 
	\frac{1}{D^k}
	\sum_{d_k = 0}^{D-1} \sum_{d_{k-1}=0}^{d_k} \ldots \sum_{d_1 = 0}^{d_2}
	\tilde{H}_2\left(\frac{d_k}{D} t\right) \tilde{H}_2\left(\frac{d_{k-1}}{D} t\right) \ldots \tilde{H}_2\left(\frac{d_1}{D} t\right) \\
	\mathcal{U}(t) \approx \sum_{k=0}^K (-it)^k C_k 
\end{gather}
Low and Wiebe \cite{IntPicture} characterize how $K$ and $D$ need to scale for a precise approximation.
Moreover, developing on the ideas of Berry \emph{et al} \cite{Berry_2015}, they show how to construct the terms $C_k$ with linear combinations of operators of the form
\begin{equation}
	\sum_{d=0}^{D-1} \dyad{d} \otimes \tilde{H}_2\left(\frac{d}{D} \tau \right).
\end{equation}
Their main result is summarized in the Theorem below.
\begin{theorem}[Hamiltonian simulation in the interaction picture \cite{IntPicture}]	\label{thm:intpicture}
	Let $H_1, H_2 \in \C^{N \times N}$, and $\alpha_1$ and $\alpha_2$ be known constants such that $\Vert H_1 \Vert \leq \alpha_1$ and $\Vert H_2 \Vert \leq \alpha_2$.
	Let $\epsilon, t > 0$.
	Let $O_{\tau}$ be an $\left(1, m, \bigO\left(\frac{\epsilon}{\alpha_2 t}\right) \right)$-block-encoding of $\exp(-i H_1 \tau)$ and let $U_{\tau, D}$ be an $\left(\alpha_2, m, \bigO\left(\frac{\epsilon}{\alpha_2 t} \frac{\log\log(\alpha_2 t/\epsilon)}{\log(\alpha_2 t/\epsilon)}\right)\right)$-block-encoding of 
	\begin{equation} \label{eq:discretizedH(t)}
		\sum_{d=0}^{D-1} \dyad{d} \otimes e^{i H_1 \frac{d}{D} \tau} H_2 e^{-i H_1 \frac{d}{D} \tau }.
	\end{equation}
	Then, choosing $\tau = \bigO(\alpha_2^{-1})$ and $D = \bigO\left(\frac{t}{\epsilon}(\alpha_1 + \alpha_2) \right)$, we can implement $\exp(-i (H_1 + H_2)t)$ up to error $\epsilon$ with error probability at most $\bigO(\epsilon)$ with
	\begin{enumerate}
		\item $\bigO\left( \alpha_2 t\right)$ calls to $O_{\tau}$,
		\item $\bigO\left( \alpha_2 t \frac{\log(\alpha_2 t / \epsilon)}{\log \log (\alpha_2 t / \epsilon)} \right)$ calls to $U_{\tau, M}$,
		\item $\bigO\left( \alpha_2 t \left(m + \log\left(\frac{t}{\epsilon}(\alpha_1 + \alpha_2)\right)\right) \log(\alpha_2 t / \epsilon)\right)$ primitive two-qubit gates.
	\end{enumerate}
\end{theorem}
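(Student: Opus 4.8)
The plan is to follow the interaction-picture strategy of \cite{IntPicture}, reducing the problem to an iterated product of identical short-time segments. Since $H_2$ is time-independent, the interaction-picture Hamiltonian $\tilde{H}_2(s) := e^{iH_1 s}H_2 e^{-iH_1 s}$ on a window of fixed length $\tau$, referenced to the window's start, is the \emph{same} function of $s$ for every window, so the single-segment identity $e^{-i(H_1+H_2)\tau}=e^{-iH_1\tau}V$ iterates to
\begin{equation}
	e^{-i(H_1+H_2)t}=\left(e^{-iH_1\tau}\,V\right)^{L},\qquad V=\mathcal{T}\exp\!\left(-i\int_0^{\tau}\tilde{H}_2(s)\,\mathrm{d}s\right),
\end{equation}
with $\tau=t/L$ and $\mathcal{T}$ the time-ordering. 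First I would fix $\tau=\Theta(\alpha_2^{-1})$ and $L=t/\tau=\Theta(\alpha_2 t)$, so that on each segment $\Vert\int_0^{\tau}\tilde{H}_2\Vert\le\alpha_2\tau=\bigO(1)$. The factor $e^{-iH_1\tau}$ is supplied directly by the input $O_\tau$, contributing $\bigO(\alpha_2 t)$ of its calls; the entire remaining burden is to implement $V$ to error $\bigO(\epsilon/L)=\bigO(\epsilon/(\alpha_2 t))$ using $U_{\tau,D}$.

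To implement $V$ I would expand it as a Dyson series and make two approximations. Truncating at order $K$ leaves a tail bounded by $(\alpha_2\tau)^{K+1}/(K+1)!$, so $K=\bigO(\log(\alpha_2 t/\epsilon)/\log\log(\alpha_2 t/\epsilon))$ meets the per-segment budget. Discretising each of the nested time integrals onto $D$ grid points turns the $k$-th term into the operator $C_k$ displayed above (with integration window $\tau$), giving $V\approx\sum_{k=0}^{K}(-i\tau)^k C_k$. The discretisation error is governed by the variation rate $\Vert\tfrac{d}{ds}\tilde{H}_2(s)\Vert=\Vert e^{iH_1 s}[iH_1,H_2]e^{-iH_1 s}\Vert\le 2\alpha_1\alpha_2$, which is precisely where $\alpha_1$ enters; balancing it against the budget fixes $D=\bigO(\tfrac{t}{\epsilon}(\alpha_1+\alpha_2))$, as claimed.

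Next I would realise $\sum_{k=0}^{K}(-i\tau)^k C_k$ as a linear combination of unitaries (Theorem \ref{thm:LCU}), following the Dyson-series gadget of \cite{Berry_2015}. Each $C_k$ is assembled from $k$ invocations of the given block-encoding $U_{\tau,D}$ acting on $k$ separate ``clock'' registers, together with an ancillary comparator that restricts the clock indices to the ordered simplex $0\le d_1\le\cdots\le d_k\le D-1$ appearing in $C_k$; the order $k$ itself is selected by a prepare register carrying amplitudes $\propto\sqrt{\tau^k/k!}$. Because each $U_{\tau,D}$ carries subnormalisation $\alpha_2$ and the coefficients sum to $\sum_k(\alpha_2\tau)^k/k!=e^{\alpha_2\tau}=\bigO(1)$, the resulting block-encoding of $V$ has subnormalisation $\bigO(1)$. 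As $V$ is, up to the above errors, unitary, a constant number of rounds of oblivious amplitude amplification \cite{Berry_2015} then converts it into an honest implementation of $V$ with subnormalisation $1$, at the cost of $\bigO(K)$ calls to $U_{\tau,D}$ per segment.

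Chaining the $L$ segments, the frame shifts cost $\bigO(\alpha_2 t)$ calls to $O_\tau$ (first bound) and the gadget costs $\bigO(K)$ calls to $U_{\tau,D}$ per segment, i.e.\ $\bigO(\alpha_2 t\,\log(\alpha_2 t/\epsilon)/\log\log(\alpha_2 t/\epsilon))$ overall (second bound); with $\bigO(m+\log D)$ ancillas the comparators, prepare unitaries and amplification rounds reproduce the stated gate count, and the per-segment errors (truncation, discretisation, and the \emph{calibrated} block-encoding imprecisions of $O_\tau$ and $U_{\tau,D}$, each multiplied by its number of uses) accumulate to $\bigO(\epsilon)$, with amplification giving failure probability $\bigO(\epsilon)$. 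I expect the main obstacle to be the two intertwined estimates of the middle steps: the discretisation bound that pins $D=\bigO(\tfrac t\epsilon(\alpha_1+\alpha_2))$, which must control the ordered Riemann-sum approximation of a time-ordered multiple integral through $\Vert\tfrac{d}{ds}\tilde{H}_2\Vert\le 2\alpha_1\alpha_2$, and the engineering of the LCU/Dyson gadget so that its subnormalisation stays $\bigO(1)$ and oblivious amplitude amplification succeeds with constant overhead. It is exactly this pairing that decouples the resources, so that $\alpha_1$ enters only logarithmically (through $D$ and the register sizes) and the query complexity scales with $\alpha_2 t$ rather than with $\Vert H_1\Vert$; making that decoupling rigorous, more than any single gadget, is the crux.
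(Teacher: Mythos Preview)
The paper does not prove this theorem: it is stated in the Preliminaries section as a result imported from \cite{IntPicture}, with the explicit remark that ``the proofs of the related theorems can be found in the original papers.'' The only argument the paper offers is the informal motivation preceding the statement --- expand $\mathcal{U}(t)$ as a Dyson series, truncate at order $K$, discretise the time domain into $D$ steps to obtain the operators $C_k$, and assemble $\sum_k(-it)^kC_k$ via the LCU technique of \cite{Berry_2015}. Your proposal is precisely this Low--Wiebe strategy, fleshed out with the segment count $L=\Theta(\alpha_2 t)$, the truncation order $K=\bigO(\log(\alpha_2 t/\epsilon)/\log\log(\alpha_2 t/\epsilon))$, the discretisation bound via $\Vert\tfrac{d}{ds}\tilde H_2\Vert\le 2\alpha_1\alpha_2$, and oblivious amplitude amplification to restore unit subnormalisation. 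So your sketch is consistent with, and more detailed than, what the paper itself provides; there is nothing to compare against beyond confirming that you have reproduced the method the paper is citing.
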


\subsection{Amplitude amplification}

Amplitude amplification is a widely used technique in quantum algorithms that generalizes Grover's search \cite{AA}.
Say we have a quantum algorithm $U$ that prepares a state $a \ket{\psi_G} + \sqrt{1-a^2} \ket{\psi_B}$, and assume that we can quickly distinguish $\ket{\psi_G}$ from $\ket{\psi_B}$.
We would like to prepare $\ket{\psi_G}$.
We could measure the outcome of $U$ in the $\{\ket{\psi_G}, \ket{\psi_B}\}$ basis until we see $\ket{\psi_G}$.
This classical sampling strategy would require calling $U$ an expected number of $\bigO(a^{-2})$ times.
Fortunately, quantum computing allows for a more efficient algorithm.

\begin{theorem}[Fixed-point amplitude amplification \cite{FixedPointAA,Gilyen2019}] \label{thm:AA}
	Let $\ket{\psi_0}$ be an $n$-qubit state, $U$ be a unitary and $\Pi$ an orthogonal projector acting on $n$ qubits such that
	\begin{equation}
		\Pi U \ket{\psi_0} = a \ket{\psi_G}
	\end{equation}
	for some $\ket{\psi_G}$.
	Then, for any $\epsilon \in ]0, 1]$ and $\delta \in ]0, a[$, there is a $(n+1)$-qubit unitary $\tilde{U}$ such that
	\begin{equation}
		\Vert \tilde{U} \ket{\psi_0} - \ket{\psi_G} \Vert \leq \epsilon
	\end{equation}
	that consists of $\bigO(\delta^{-1} \log(1 / \epsilon))$ 
	$U$, $U^{\dg}$, $C_{\Pi}\mathrm{NOT}$, $C_{\dyad{\psi_0}}\mathrm{NOT}$\footnote{For a projector $\Pi$, we mean $C_{\Pi}\mathrm{NOT} := \Pi \otimes X + (I - \Pi) \otimes I$.} and two-qubit gates.
\end{theorem}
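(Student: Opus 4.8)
The plan is to collapse the dynamics onto a two-dimensional invariant subspace and then realize an amplifying polynomial of the amplitude by quantum signal processing. From the gate set I first build the two reflections $R_{\psi_0} := I - 2\dyad{\psi_0}$ and $R_{\Pi} := I - 2\Pi$, each obtained by phase kickback from a single $C_{\dyad{\psi_0}}\mathrm{NOT}$, respectively $C_{\Pi}\mathrm{NOT}$, acting on the extra qubit. Writing $\ket{\phi} := U^{\dg}\ket{\psi_G}$, the hypothesis $\Pi U \ket{\psi_0} = a\ket{\psi_G}$ (take $a$ real and positive without loss of generality) gives $\bra{\psi_0}\ket{\phi} = a$ and $U^{\dg} R_{\Pi} U \ket{\psi_0} = \ket{\psi_0} - 2a\ket{\phi}$. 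Hence both $R_{\psi_0}$ and $U^{\dg} R_{\Pi} U$ preserve $\mathcal{S} := \mathrm{span}\{\ket{\psi_0}, \ket{\phi}\}$, acting there as reflections about $\ket{\psi_0}$ and about $\ket{\phi}$; since $\bra{\psi_0}\ket{\phi} = a = \cos\vartheta$, their product rotates $\mathcal{S}$ by an angle fixed by $\vartheta$, so the entire amplification takes place inside this plane.

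Next I would replace the plain reflections by phased (generalized) reflections $R_{\psi_0}(\alpha) := I - (1 - e^{i\alpha})\dyad{\psi_0}$ and $R_{\Pi}(\beta) := I - (1 - e^{i\beta})\Pi$, each still built from $\bigO(1)$ applications of the corresponding controlled-NOT and a controlled phase on the ancilla qubit (this is the role of the $(n+1)$-th qubit). Forming the generalized Grover iterate $\mathcal{G}_j := - U R_{\psi_0}(\alpha_j) U^{\dg} R_{\Pi}(\beta_j)$ and setting $\tilde{U} := \mathcal{G}_L \cdots \mathcal{G}_1 U$, the standard signal-processing analysis shows that on the plane $\mathrm{span}\{\ket{\psi_G}, U\ket{\psi_0}\}$ this map implements a degree-$(2L+1)$ polynomial $P$ of the amplitude, so that $\tilde{U}\ket{\psi_0} = P(a)\ket{\psi_G} + \ket{\perp}$ with $\bra{\psi_G}\ket{\perp}=0$. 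Moreover, every odd $P$ with $|P(x)| \leq 1$ on $[-1,1]$ that admits a suitable complementary polynomial is reachable for an explicit choice of the phases $\{\alpha_j,\beta_j\}$. The target is therefore to choose the phases so that $\mathrm{Re}\,P(a) \geq 1 - \epsilon^2/2$ for every $a \in [\delta, 1]$; since a direct computation gives $\Vert \tilde{U}\ket{\psi_0} - \ket{\psi_G}\Vert^2 = 2\left(1 - \mathrm{Re}\,P(a)\right)$, this yields the claimed $\epsilon$ bound.

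The analytic core is the existence of such an amplifying polynomial of low degree. I would take $\mathrm{Re}\,P$ to be essentially $1$ minus a rescaled Chebyshev polynomial: because $T_{2L+1}$ grows like $\delta^{-(2L+1)}$ once its argument leaves a $\delta$-sized neighbourhood of the origin, one can arrange $\mathrm{Re}\,P(a)\geq 1-\epsilon^2/2$ throughout $[\delta,1]$ while keeping $|P|\leq 1$ on $[-1,1]$, at degree $2L+1 = \bigO(\delta^{-1}\log(1/\epsilon))$. Each of the $L$ layers costs one $U$, one $U^{\dg}$, one $C_{\Pi}\mathrm{NOT}$, one $C_{\dyad{\psi_0}}\mathrm{NOT}$ and $\bigO(1)$ two-qubit gates, so the total is $\bigO(\delta^{-1}\log(1/\epsilon))$ calls of each type, matching the statement.

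The step I expect to be the main obstacle is the joint construction of this amplifying polynomial and the extraction of the signal-processing phases $\{\alpha_j,\beta_j\}$ that realize it: one must control the degree to obtain the $\delta^{-1}\log(1/\epsilon)$ scaling, enforce $|P|\leq 1$ on all of $[-1,1]$ so that the phases are real, and certify the existence of the complementary polynomial required by the parametrization. Establishing the Chebyshev bounds and the stable recovery of the phases is the delicate part; by comparison, the reduction to the two-dimensional subspace $\mathcal{S}$ and the gate-counting are routine.
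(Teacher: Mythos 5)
Your proposal is essentially correct and follows the same route as the sources the paper cites for this statement: the paper itself gives no proof of Theorem~\ref{thm:AA} (it is quoted in the Preliminaries from \cite{FixedPointAA,Gilyen2019}, with proofs deferred to the original papers), and your argument---phased Grover iterates built from $C_{\Pi}\mathrm{NOT}$ and $C_{\dyad{\psi_0}}\mathrm{NOT}$ via an ancilla, restriction to the invariant plane $\mathrm{span}\{\ket{\psi_G}, U\ket{\psi_0}\}$, and a Chebyshev-type odd polynomial with $\vert P\vert \leq 1$ on $[-1,1]$ and $\mathrm{Re}\,P \geq 1-\epsilon^2/2$ on $[\delta,1]$ of degree $\bigO(\delta^{-1}\log(1/\epsilon))$---is exactly the fixed-point construction of those references. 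The technical lemmas you flag (the QSP completion/complementary-polynomial lemma and the sign-function-style degree bound) are standard and available in \cite{Gilyen2019}, so the proof goes through as outlined.
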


\begin{corollary} \label{thm:AA_blocks}
	Let $U_A$ be an $(\alpha, m)$-block-encoding of some unitary matrix $A$.
	Then,  for any $\epsilon \in ]0, 1]$ and $\delta > \alpha$, we can a build a $(1, m + 1, \epsilon)$-block-encoding of $A$ with $\bigO(\delta \log(1 / \epsilon))$ calls to $U$ and $U^{\dg}$ and two-qubit gates.
\end{corollary}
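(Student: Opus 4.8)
The plan is to recognise the statement as an instance of (oblivious, fixed-point) amplitude amplification and reduce it to Theorem~\ref{thm:AA}. Writing out the action of the block-encoding on an arbitrary system state $\ket{\psi}$,
\begin{equation}
	U_A \ket{0^m}\ket{\psi} = \frac{1}{\alpha}\ket{0^m} A \ket{\psi} + \ket{\perp_\psi},
\end{equation}
where $(\bra{0^m}\otimes I_n)\ket{\perp_\psi} = 0$. The decisive observation is that $A$ is \emph{unitary}, so $\Vert A\ket{\psi}\Vert = 1$ and the amplitude of the ``good'' component is exactly $1/\alpha$, the \emph{same} value for every input $\ket{\psi}$.

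I would then set $\Pi := \dyad{0^m}\otimes I_n$ and invoke Theorem~\ref{thm:AA} with $U = U_A$, so that $\Pi U_A \ket{0^m}\ket{\psi} = \tfrac{1}{\alpha}\ket{0^m}A\ket{\psi}$ identifies $a = 1/\alpha$ and $\ket{\psi_G} = \ket{0^m}A\ket{\psi}$. Taking the theorem's parameter to be $1/\delta$, the admissibility condition $1/\delta < a = 1/\alpha$ becomes precisely $\delta > \alpha$, and the gate count $\bigO\big((1/\delta)^{-1}\log(1/\epsilon)\big) = \bigO(\delta\log(1/\epsilon))$ comes out as claimed. The output $\tilde U$ acts on one qubit beyond the $n+m$ of $U_A$, hence on $m+1$ ancilla qubits, and the guarantee $\Vert \tilde U\ket{0^m}\ket{\psi} - \ket{0^m}A\ket{\psi}\Vert \leq \epsilon$ is exactly the assertion that $\tilde U$ is a $(1, m+1, \epsilon)$-block-encoding of $A$. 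The reflections $C_\Pi\mathrm{NOT}$ and the relevant state reflection are multiply-controlled NOTs on the $m$ ancilla qubits, contributing only $\bigO(m)$ primitive two-qubit gates per round, so the only calls that scale with $\delta$ are those to $U_A$ and $U_A^{\dg}$.

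The point that needs care -- and the main obstacle -- is that Theorem~\ref{thm:AA} amplifies a single fixed input $\ket{\psi_0}$ through a reflection $C_{\dyad{\psi_0}}\mathrm{NOT}$ about that input, whereas a block-encoding must be realised by one circuit that succeeds on \emph{all} $\ket{\psi}$ simultaneously. I would resolve this by passing to the oblivious form of the amplification, in which the input reflection is replaced by the input-independent ancilla reflection $2(\dyad{0^m}\otimes I_n) - I$, making the whole construction $\ket{\psi}$-independent. This substitution is legitimate precisely because $A$ is unitary: the block $(\bra{0^m}\otimes I_n)U_A(\ket{0^m}\otimes I_n) = A/\alpha$ then has all of its singular values equal to $1/\alpha$, so the effective rotation angle of each amplification step is the same constant on every invariant two-dimensional subspace, and the fixed-point schedule of Theorem~\ref{thm:AA} drives all of them to amplitude $\geq 1-\epsilon$ at once. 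Equivalently, this is the statement that fixed-point amplitude amplification, realised through the QSVT of Ref.~\cite{Gilyen2019}, acts as a uniform transformation of the singular values of $A/\alpha$; verifying that the uniform-singular-value (unitary $A$) hypothesis is exactly what licenses the oblivious substitution is the heart of the argument, the remaining parameter matching and gate counting being routine.
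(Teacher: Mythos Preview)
The paper does not supply a proof for this corollary; it is stated immediately after Theorem~\ref{thm:AA} and treated as self-evident. Your argument is correct and in fact more careful than what the paper offers: you correctly identify the parameter correspondence ($a = 1/\alpha$, theorem's lower-bound parameter $=1/\delta$, yielding the condition $\delta>\alpha$ and the count $\bigO(\delta\log(1/\epsilon))$), and you explicitly address the one genuine subtlety the paper glosses over---namely that Theorem~\ref{thm:AA} as written amplifies a fixed input $\ket{\psi_0}$, whereas a block-encoding must act correctly on every system input simultaneously. Your resolution via the oblivious variant, justified by the unitarity of $A$ (so that all singular values of $A/\alpha$ equal $1/\alpha$ and the fixed-point schedule succeeds uniformly across invariant subspaces), is exactly the right way to close this gap and is consistent with the QSVT viewpoint of Ref.~\cite{Gilyen2019} that the paper itself cites.
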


\section{Oracle conversions \label{sec:oracles}}

In our input model, we assumed access to oracles $O_A$, $O_L$, and $O_H$.
Below we show how we can build some operators that convey information about the graph that will be useful to our constructions using just $\bigO(\log(N))$ calls to the input oracles.
That is, we could have assumed them to be part of the input model without affecting the final conclusions up to $\polylog(N)$ factors.

\begin{lemma} \label{thm:degree_oracle}
	There is an operator, call it $O_K$, that acts as
	\begin{equation} \label{eq:type_oracle}
		O_K \ket{i, z} = 
		\begin{cases}
			\ket{i, z}, \text{ if $i$ is regular node} \\
			\ket{i, z \oplus 1}, \text{ if $i$ is hub}
		\end{cases}
		,
	\end{equation}
	with $i \in [N]$ and $z \in \zo$, and can be implemented with  $\bigO(\log N)$ calls to $O_L$.
\end{lemma}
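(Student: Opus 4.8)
The plan is to exploit the large gap between the degrees of the two classes of nodes guaranteed by Definition \ref{def:hub-sparse}: every regular node has degree at most $s$, whereas every hub has degree at least $N-h$, and since $s, h = \bigO(\polylog(N))$ we have $s < N - h$ for all sufficiently large $N$. Hence a single threshold $l^*$ with $s < l^* \le N-h$ (for instance $l^* = s+1$) separates the two cases: node $i$ is a hub if and only if its row has at least $l^*$ non-zero entries. The oracle $O_L$ is precisely the tool to test this, since querying the $l^*$-th non-zero entry of row $i$ either returns a valid position (when $i$ is a hub) or raises the ``fewer than $l^*$ entries'' flag (when $i$ is regular).

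Concretely, I would adjoin an ancillary register initialised to the fixed computational-basis state $\ket{l^*}$ (prepared with $\bigO(\log N)$ NOT gates) together with the flag register, and apply $O_L$ to the pair $\ket{i}\ket{l^*}$. For a hub this transforms the second register into the position of the $l^*$-th neighbour and leaves the flag unset; for a regular node it leaves the flag set. A single NOT on the target qubit $z$, controlled on the flag indicating that the entry exists (equivalently, controlled on the flag being unset), then implements exactly the map \eqref{eq:type_oracle}: $z \to z \oplus 1$ precisely when $i$ is a hub. Finally I would apply $O_L^{\dg}$ to uncompute the ancillary and flag registers, returning them to the product state $\ket{l^*}$ and disentangling them from the system, so that the net action on $\ket{i, z}$ is the claimed permutation. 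This uses a constant number of calls to $O_L$ and $O_L^{\dg}$ and $\bigO(\log N)$ additional two-qubit gates, comfortably within the stated bound.

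The step I expect to require the most care is the reversibility and uncomputation, which hinges on the precise convention by which $O_L$ signals ``fewer than $l$ entries''. Because $O_L$ must be unitary, it cannot map every out-of-range input to one common value; one has to fix a convention (e.g.\ a dedicated flag qubit, with $O_L$ acting as the identity on the position register when the entry does not exist) under which $O_L$ is a genuine permutation and $O_L^{\dg}$ cleanly reverses the first query. Once this convention is pinned down, the flag becomes a single qubit and the controlled flip plus uncomputation are routine; if instead the flag is encoded as a sentinel value in the $\lceil \log N \rceil$-qubit position register, detecting it costs an additional $\bigO(\log N)$ comparison gates but still only a constant number of $O_L$ calls. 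I would therefore state the flag convention explicitly at the start of the proof and verify that the intermediate ancillary state is a product state after the uncomputation, which is what guarantees that $O_K$ is well defined as a unitary.
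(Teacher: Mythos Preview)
Your argument is correct and in fact sharper than the paper's. The paper determines the exact degree of $i$ by running a binary search over the index $l$ to locate the largest $l$ for which $r(l,i)$ is not a flag, costing $\bigO(\log N)$ queries to $O_L$, and then compares that degree to the threshold $N-h$. You observe that the full degree is unnecessary: the gap $s < N-h$ guaranteed by Definition~\ref{def:hub-sparse} lets a single query at any fixed $l^*\in(s,N-h]$ decide hub versus regular, so you get away with $\bigO(1)$ oracle calls (plus $\bigO(\log N)$ elementary gates for the flag comparison/uncomputation), comfortably inside the claimed bound. The trade-off is that the paper's binary-search routine actually yields the degree as a by-product, which could be reused elsewhere, whereas your construction is leaner but only outputs the hub/regular bit; for the purposes of this lemma your route is the more economical one. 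Your caveat about pinning down the flag convention so that $O_L$ is a genuine permutation and $O_L^{\dagger}$ uncomputes cleanly is well taken and applies equally to the paper's proof.
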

\begin{proof}
	According to our input model, if $l$ is larger than the degree of a node $i$, then $r(l, i)$ contains a flag indicating so.
	So, we can run a binary search, beginning at $N$, to find the largest $l$ such that $r(l, i)$ is not a flag state.
	This corresponds to the degree of $i$.
	If $i$ is larger than $N-h$, then $i$ is a hub (\textit{cf.} Definition \ref{def:hub-sparse}).
	This strategy only requires $\bigO(\log N)$ calls to $O_L$.
\end{proof}

If we asked $O_D$ to be part of the input model, we could relax the condition that  $r(l, i)$ contains a flag if there are less than $l$ non-zero entries in the $i$-th row.
Instead, provided with $O_D$, for such $l$'s the function $r(l, i)$ could be take any value without affecting the algorithm.

\begin{lemma} \label{thm:zeros_oracle}
	Let $i$ be a hub.
	There is an operator, call it $O_Z$, that acts as
	\begin{equation} \label{eq:zeros_oracle}
		O_Z \ket{i, l} = \ket{i, q(l, i)},
	\end{equation}
	with $i, l \in [N]$, where $q(l, i)$ is the position of the $l$-th zero entry in the $i$-th row of $A$,  and can be implemented with $\bigO(\log N)$ calls to $O_L$.
\end{lemma}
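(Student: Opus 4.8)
The plan is to reduce the computation of $q(l,i)$ to a \emph{single} binary search over the non-zero positions supplied by $O_L$, exploiting the fact that the zeros of a row are exactly the ``gaps'' in the increasing list of its non-zero positions. I would assume, as the phrase ``$l$-th non-zero entry'' suggests, that $O_L$ returns the non-zero columns in increasing order, $r(1,i) < r(2,i) < \cdots < r(d_i,i)$, where $d_i$ is the degree of $i$. Define the gap function $Z(l') := r(l',i) - (l'-1)$, which counts the zero entries lying strictly to the left of the $l'$-th non-zero entry; a single call to $O_L$ plus arithmetic evaluates $Z(l')$, and $Z$ is non-decreasing in $l'$.

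With $Z$ in hand, locating the $l$-th zero is immediate. If $l \le Z(1) = r(1,i)$, then the $l$-th zero precedes every non-zero entry and sits at $q(l,i) = l-1$. Otherwise I would binary search for the unique $l'$ with $Z(l') < l \le Z(l'+1)$ (using the convention $Z(d_i+1) = N - d_i$ for the region past the last non-zero); the zeros numbered $Z(l')+1, \dots, Z(l'+1)$ are precisely those filling the gap immediately after the $l'$-th non-zero, so $q(l,i) = r(l',i) + (l - Z(l'))$. If $l$ exceeds the total number of zeros $N - d_i$, I would raise a flag, following the same convention as $O_L$. Since the search ranges over $l' \in [N]$ with one $O_L$ call per step, it uses $\bigO(\log N)$ calls, exactly as in the binary search of Lemma \ref{thm:degree_oracle}; running it at fixed depth $\lceil \log N \rceil$ and controlling each branch on the register holding $i$ keeps it coherent in superposition over hubs.

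It remains to turn this into the in-place unitary $\ket{i,l} \mapsto \ket{i, q(l,i)}$. For the standard reversible-embedding trick I also need the inverse map: given a zero position $j$ of a hub $i$, its rank among the zeros equals $(j+1)$ minus the number of non-zeros $\le j$, and that count is itself one binary search over $l'$ with $O_L$. With both $q(\cdot,i)$ and its inverse computable in $\bigO(\log N)$ calls, I would compute $q(l,i)$ into a fresh register, uncompute $l$ by applying the inverse map to $q(l,i)$, and swap the two registers; extending the partial bijection to a genuine permutation of $[N]$ (e.g.\ by sending the overflow indices to the non-zero positions) makes the full map unitary. The main obstacle is keeping the cost at $\bigO(\log N)$ rather than $\bigO(\log^2 N)$: a direct scheme that, for a trial position $j$, counts the zeros in $[0,j]$ by an inner search would nest two binary searches, whereas phrasing everything through $Z$ collapses the task to a single search.
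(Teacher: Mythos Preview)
Your approach is correct and takes a genuinely different route from the paper. The paper's argument is \emph{offline}: it exploits that a hub-sparse network has only $\polylog(N)$ zero entries across all hub rows, so one can classically locate every one of them by repeated binary searches for the discontinuities in $r(\cdot,i)$, and then hardwire the resulting table $q(l,i)$ into a $\polylog(N)$-sized circuit that makes no further oracle calls. Your construction is \emph{online}: a single coherent binary search on the monotone gap function $Z(l') = r(l',i)-(l'-1)$ computes $q(l,i)$ directly, and a second search for the rank of a given zero position furnishes the inverse needed to erase $l$ in place. Your method matches the stated $\bigO(\log N)$ per-call bound more transparently (the paper's proof in fact lands at $\polylog(N)$ preprocessing queries) and does not rely on the total zero count being small; the paper's method, in exchange, yields an $O_Z$ that is oracle-free after preprocessing, which mildly simplifies the bookkeeping in the later block-encoding constructions.
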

\begin{proof}
	Let $k_i$ be the degree of node $i$.
	A fully connected hub $i$ would have $r(l,i)=l$ for all $l\in [N]$, while a hub with missing connections has a total of $N-k_i$ indices missing, leading to discontinuities in the $r(l,i)$ function. 
	The key observation is that we just need to locate these $\polylog N$ discontinuities to build the $q(l,i)$ function. 
	We can do this by running a binary search on the array $[r(1, i), r(2, i), \ldots, r(k_i, i)]$, whose entries can be quickly accessed via queries to $O_L$.
	We know that the positions $r(l, i)+1, r(l, i)+2, \ldots, r(l,i)+x-1$ are zeros in the $i$-th row of $A$ if $r(l+1,i) - r(l, i) = x > 1$.

	Each binary search requires $\bigO(\log N)$ calls to $O_L$.
	For a hub-sparse network, there are $\polylog N$ hubs, each with $\polylog N$ zero entries. 
	As such, we can reconstruct the function $q(l,i)$ with $\polylog N$ calls to $O_L$. 
	Having classical access to $q(l,i)$, the unitary $O_Z$ can be built with a $\polylog N$ overhead.

\end{proof}

Our simulation algorithm will make use of operators $O_A$, $O_L$, $O_H$, $O_K$, and $O_Z$.
We summarize their properties in Table \ref{tab:summary}.

\begin{table}[t]
\resizebox{\textwidth}{!}{%
\begin{tabular}{ c c c  c}
	\hline\hline
 	\textbf{Operator}&\textbf{Description}          &\textbf{Expression}                           &\textbf{Calls to input model}\\ 
 	\hline
 	$O_A$            &Accesses adjacency matrix     &$O_A\ket{i, j, z}=\ket{i, j, z \oplus A_{ij}}$&$1$                          \\ 
 	$O_L$            &Accesses adjacency list       &$O_L  \ket{i, l}   = \ket{i, r(l, i)}$        &$1$                          \\ 
 	$O_H$            &Locates hubs                  &$O_H  \ket{l}   = \ket{h(l)}$                 &$1$                          \\ 
 	$O_K$            &Indicates type of nodes       &$O_A\ket{i, j, z}=\ket{i, j, z \oplus A_{ij}}$&$\bigO(\log(N))$             \\  
 	$O_Z$            &Locates missing links for hubs&$O_Z \ket{i, l} = \ket{i, q(l, i)}$           &$\bigO(\log(N))$             \\ 
 	\hline\hline
\end{tabular}
}
\caption{For convenience of the reader, we gather here the operators defined so far. 
$r(l,i)$ denotes the position of the $l$-th non-zero entry in the $i$-th row of $A$, $q(l,i)$ the position of the $l$-th non-zero entry in the $i$-th row of $A$ when $i$ is a hub, and $h(l)$ the $l$-th hub in the network.
$O_A$, $O_L$, and $O_H$ are part of the input model (section \ref{sec:input_model}), while $O_K$ and $O_Z$ can be constructed from the input model with a $\log(N)$ overhead (Lemmas \ref{thm:degree_oracle} and \ref{thm:zeros_oracle}).}
\label{tab:summary}
\end{table}

\newpage

\section{Fast-forwarding hubs \label{sec:fastforwarding}}

As we have discussed in the \hyperref[sec:main_results]{Main Results} section, our strategy is to move the system to the interaction picture.
For that, we need to efficiently implement the operation $\exp(-i G t)$, with $G$ being defined as in equation \eqref{eq:Gdef}.
Here we show that this is indeed possible.
Note that since $\Vert G \Vert = \tilde{\Theta}(\sqrt{N})$, this proves that $G$ is fast-forwardable.

The key is to analyze the spectral decomposition of $G$.
The corresponding graph contains $M$ hubs, each connected to all the $N-M$ regular nodes (but with no connections between hubs).
Considering this simple structure, we can prove the following.

\begin{lemma} \label{thm:spectral_decomposition}
	Let $G$ be defined as in equation \eqref{eq:Gdef}.
	Say that there are $M$ hubs and $N-M$ regular nodes.
	Then, there are only two eigenvectors with eigenvalues different from zero.
	These are
	\begin{equation}
		\ket{\Psi_{\pm}} = 
		\frac{1}{\sqrt{2}} \sum_{j \text{ hub}} \frac{\ket{j}}{\sqrt{M}}
		\pm
		\frac{1}{\sqrt{2}} \sum_{j \text{ regular node} } \frac{\ket{j}}{\sqrt{N-M}},
	\end{equation}	
	and have eigenvalues
	\begin{equation}
		\lambda_{\pm} = \sqrt{M (N-M)}.
	\end{equation}
\end{lemma}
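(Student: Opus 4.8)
The plan is to recognize that $G$ is, after relabelling the nodes so that the $M$ hubs come first and the $N-M$ regular nodes second, the adjacency matrix of the complete bipartite graph $K_{M,\,N-M}$, and to exploit its block structure directly. In this ordering, equation \eqref{eq:Gdef} reads
\begin{equation}
	G = \begin{bmatrix} 0 & J \\ J^{\dg} & 0 \end{bmatrix},
\end{equation}
where $J$ is the $M \times (N-M)$ all-ones matrix. I would first show that $G$ has at most two nonzero eigenvalues. The cleanest route is to note that $G$ has only two distinct rows: every row indexed by a hub equals $(\mathbf{0}, \mathbf{1})$ and every row indexed by a regular node equals $(\mathbf{1}, \mathbf{0})$, so $\mathrm{rank}(G) \leq 2$. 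Since $G$ is real symmetric, hence diagonalizable with real eigenvalues, the number of nonzero eigenvalues counted with multiplicity equals the rank; thus at most two eigenvalues are nonzero and the remaining $N-2$ vanish.

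Next I would pin down the nonzero part of the spectrum with a short ansatz. For a vector $\ket{v} = (\alpha\,\mathbf{1}_M,\, \beta\,\mathbf{1}_{N-M})$ that is constant ($=\alpha$) on the hubs and constant ($=\beta$) on the regular nodes, applying $G$ simply sums the opposite block,
\begin{equation}
	G\ket{v} = \big((N-M)\beta\,\mathbf{1}_M,\; M\alpha\,\mathbf{1}_{N-M}\big).
\end{equation}
Imposing $G\ket{v} = \lambda \ket{v}$ reduces to the $2\times 2$ system $(N-M)\beta = \lambda\alpha$ and $M\alpha = \lambda\beta$. Multiplying the two equations and cancelling $\alpha\beta$ (which is nonzero whenever $\lambda \neq 0$) forces $\lambda^2 = M(N-M)$, i.e. $\lambda_{\pm} = \pm\sqrt{M(N-M)}$. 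Conversely, reading the eigenvalue equation componentwise shows that any eigenvector with $\lambda \neq 0$ must in fact be constant on each block (all hub entries equal $(N-M)\beta/\lambda$, all regular entries equal $M\alpha/\lambda$), so this ansatz captures the entire nonzero spectrum and independently re-derives that there are exactly two such eigenvalues.

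It then remains to read off and normalize the eigenvectors. Solving the system for $\lambda_+ = +\sqrt{M(N-M)}$ gives $\beta/\alpha = \sqrt{M/(N-M)}$, and requiring $\langle v | v\rangle = M\alpha^2 + (N-M)\beta^2 = 1$ yields $\alpha = 1/\sqrt{2M}$ and $\beta = 1/\sqrt{2(N-M)}$, which is precisely the claimed $\ket{\Psi_+}$; taking $\lambda_- = -\sqrt{M(N-M)}$ flips the sign of $\beta$ and produces $\ket{\Psi_-}$. I do not anticipate a genuine obstacle here: the only point deserving care is ruling out further nonzero eigenvalues, which the rank argument (equivalently, the two-distinct-rows observation) settles cleanly, after which the statement follows from a direct verification that $\ket{\Psi_{\pm}}$ are orthonormal and satisfy $G\ket{\Psi_{\pm}} = \lambda_{\pm}\ket{\Psi_{\pm}}$.
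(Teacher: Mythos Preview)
Your proof is correct and considerably more streamlined than the paper's. The paper verifies by direct inspection that $\ket{\Psi_\pm}$ are eigenvectors with the stated eigenvalues, and then, to rule out any further nonzero eigenvalues, computes the full characteristic polynomial of $G$ via a recursive Laplace expansion (Appendix~\ref{sec:spectral_proof}), obtaining $(-1)^N\lambda^{N-2}(\lambda^2 - M(N-M))$. Your approach replaces that determinant calculation with the one-line observation that $G$ has only two distinct rows, hence $\mathrm{rank}(G)\le 2$, which for a real symmetric matrix immediately bounds the number of nonzero eigenvalues. You then recover the nonzero spectrum by the natural two-parameter ansatz on block-constant vectors, which simultaneously yields the eigenvalues and the normalized eigenvectors. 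What the paper's approach buys is an explicit characteristic polynomial (not actually needed elsewhere); what your approach buys is brevity and a clearer structural reason---$G$ is the adjacency matrix of $K_{M,N-M}$, a rank-two matrix---for why the spectrum collapses. Note incidentally that the statement as printed has $\lambda_\pm = \sqrt{M(N-M)}$ for both signs, but both your derivation and the paper's characteristic polynomial confirm $\lambda_\pm = \pm\sqrt{M(N-M)}$.
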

\begin{proof}
	That $\ket{\Psi_{\pm}}$ are eigenvectors with eigenvalues $\lambda_{\pm}$ can be proved by direct inspection.
	To see that all the other eigenvalues are zero, we show in Appendix \ref{sec:spectral_proof} that the characteristic polynomial of $G$ is (up to a sign)
	\begin{equation}
		\lambda^{N-2} \left(\lambda^2 - M (N-M) \right).
	\end{equation}
	Therefore, besides $\lambda_{\pm}$ the only acceptable solutions are $\lambda=0$.
\end{proof}

Consequently, we can write
\begin{equation} \label{eq:Gdecomposition}
	e^{-i G t} = (e^{- i \lambda_+ t}-1) \dyad{\Psi_+} + (e^{- i \lambda_- t}-1) \dyad{\Psi_-} + I.
\end{equation}
So, we can simulate $G$ by preparing $\ket{\Psi_{\pm}}$ and applying the corresponding phase ($-\lambda_{\pm} t$).
First, we show that the states can be prepared efficiently.
\begin{lemma} \label{thm:PMOperators}
	Let $\ket{\Psi_{\pm}}$ be defined as in Lemma \ref{thm:spectral_decomposition}, and let $P_{\pm}$ be operators acting as
	\begin{equation}
		P_{\pm} \ket{0^n} = \ket{\Psi_{\pm}}
	\end{equation}
	Then, we can prepare $(\polylog N, 2)$-block-encodings of $P_{\pm}$ with $\bigO(1)$ calls to the controlled versions of $O_K$ and $O_H$ and $\bigO(\polylog N)$ primitive two-qubit gates.
\end{lemma}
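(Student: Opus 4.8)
The plan is to write $\ket{\Psi_\pm} = \tfrac{1}{\sqrt2}\ket{H} \pm \tfrac{1}{\sqrt2}\ket{R}$, where $\ket{H} := \tfrac{1}{\sqrt M}\sum_{j\text{ hub}}\ket{j}$ and $\ket{R} := \tfrac{1}{\sqrt{N-M}}\sum_{j\text{ regular}}\ket{j}$ are the normalized uniform superpositions over the hub and regular subsets, and to prepare each of these two pieces separately before recombining them coherently. Since $M$ is a power of two and there are exactly $M$ hubs, I can prepare $\ket{H}$ exactly: apply $\log M$ Hadamards to obtain $\tfrac{1}{\sqrt M}\sum_{l\in[M]}\ket{l}$ and then call $O_H$ once to send $\ket{l}\mapsto\ket{h(l)}$. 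The state $\ket{R}$ is the troublesome one, since I have no oracle enumerating regular nodes, so I instead prepare the uniform superposition $\tfrac{1}{\sqrt N}\sum_j\ket{j}$ over all $N$ nodes with $n$ Hadamards, call $O_K$ once to write the hub/regular flag into an ancilla $b$, and postselect $b$ on ``regular''; this yields $\ket{R}$ but only with success amplitude $\sqrt{(N-M)/N}$.

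To recombine the two pieces with the correct relative weight and sign, I would introduce a second ancilla $a$, rotate it to $\cos\theta\ket{0}_a + \sin\theta\ket{1}_a$ (inserting a $Z$ on $a$ for the ``$-$'' branch), then run the $O_H$-based preparation controlled on $a=0$ and the $O_K$-based preparation controlled on $a=1$. After postselecting $b=0$ the state is $\cos\theta\,\ket{0}_a\ket{H} + \sin\theta\sqrt{(N-M)/N}\,\ket{1}_a\ket{R}$. The key observation is that the which-branch ancilla $a$ is perfectly correlated with node type ($a=0$ sits on hubs, $a=1$ on regular nodes), so a single further application of $O_K$, now with $a$ as target, flips $a$ to $1$ on the hub branch while leaving it $1$ on the regular branch, thereby merging both branches; an $X$ on $a$ then returns it to $0$, leaving $\cos\theta\big(\ket{H} \pm \sqrt{(N-M)/N}\,\ket{R}\big)$ on the system with both ancillas in $\ket{0}$. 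Choosing $\tan\theta = \sqrt{N/(N-M)}$ balances the two amplitudes so that the output is exactly proportional to $\ket{\Psi_\pm}$, with normalization $\alpha = 1/(\sqrt2\cos\theta)$.

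Finally I would bound the cost. Because $M = \bigO(\polylog N)$, we have $\alpha = 1/(\sqrt2\cos\theta) = 1 + \bigO(M/N) = \bigO(1) \subseteq \polylog N$, the two ancillas $a$ and $b$ account for the ``$2$'' in the block-encoding, and the construction is exact. The whole circuit uses $\bigO(1)$ calls to the controlled versions of $O_H$ and $O_K$ together with $\bigO(\log N)$ Hadamards, single-qubit rotations, and one $X$ and one $Z$, which compile to $\bigO(\polylog N)$ primitive two-qubit gates. The main obstacle is the exact preparation of $\ket{R}$: the absence of a direct enumeration of regular nodes forces the uniform-over-$N$-then-filter trick, whose imperfect success amplitude $\sqrt{(N-M)/N}$ is the sole source of subnormalization. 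The crux of the argument is to show that this mismatch can be absorbed into a benign $\bigO(1)$ normalization by rebalancing the hub branch, and that the which-branch ancilla can be uncomputed coherently precisely because $O_K$ separates the disjoint hub and regular supports.
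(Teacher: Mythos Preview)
Your proposal is correct and follows the same high-level decomposition as the paper: write $\ket{\Psi_\pm}=\tfrac{1}{\sqrt2}(\ket{H}\pm\ket{R})$, prepare $\ket{H}$ exactly via $H^{\otimes\log M}$ followed by $O_H$, and prepare $\ket{R}$ by Hadamarding all $n$ qubits and flagging with $O_K$. The only genuine difference is how the two branches are recombined. The paper simply invokes the Linear Combination of Unitaries theorem (its Theorem~\ref{thm:LCU}) on the two preparation unitaries, obtaining a $\bigl(\tfrac{1}{\sqrt2}+\tfrac{1}{\sqrt2}\sqrt{N/(N-M)},\,2\bigr)$-block-encoding. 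You instead exploit the fact that the hub and regular supports are disjoint to uncompute the which-branch ancilla \emph{deterministically} with one extra call to $O_K$, which is a more hands-on construction that avoids citing LCU and even yields a slightly tighter normalization $\alpha=\sqrt{(2N-M)/(2(N-M))}\to 1$ rather than $\to\sqrt{2}$. Both routes use two ancillas, $\bigO(1)$ oracle calls, and $\bigO(\polylog N)$ elementary gates, so the lemma follows either way.
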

\begin{proof}
	The states $\ket{\Psi_{\pm}}$ are combinations of a uniform superposition over the hub nodes and a uniform superposition over the regular nodes.
	To prepare a uniform superposition over the regular nodes, we do 
	\begin{align}
		\ket{0} \ket{0^n} 
		\xrightarrow{H^{\otimes n}} & \ket{0} \sum_{j=0}^{N-1} \frac{\ket{j}}{\sqrt{N}} \\
		\xrightarrow{O_K} & \ket{0} \sum_{j \text{ regular node}} \frac{\ket{j}}{\sqrt{N}} + \ket{1} \sum_{j \text{ hub}} \frac{\ket{j}}{\sqrt{N}} \\
		= & \sqrt{\frac{N-M}{N}} \ket{0} \sum_{j \text{ regular node}} \frac{\ket{j}}{\sqrt{N-M}} + \ket{\perp},
	\end{align}
	where $\bra{0}\ket{\perp}=0$. 
	That is, $O_K H^{\otimes n}$ is a $\left(\sqrt{\frac{N}{N-M}}, \log M\right)$-block-encoding of a preparation of the uniform superposition of regular nodes.
	A uniform superposition of hubs can be prepared as
	\begin{align}
		\ket{0^n} 
		\xrightarrow{H^{\otimes \log M}} &  \sum_{j=0}^{M-1} \frac{\ket{j}}{\sqrt{M}} \\
		\xrightarrow{O_H} & \sum_{j=0}^{M-1} \frac{\ket{h(j)}}{\sqrt{M}}.
	\end{align} 
	Then, $P_{\pm}$ can be expressed as
	\begin{equation}
		\frac{1}{\sqrt{2}} \sqrt{\frac{N}{N-M}} O_K H^{\otimes n} 
		\pm \frac{1}{\sqrt{2}} O_H H^{\otimes \log M} 
	\end{equation}
	From Theorem \ref{thm:LCU}, we can do this with the Linear Combination of Unitaries technique and we get a $\left( \frac{1}{\sqrt{2}} + \frac{1}{\sqrt{2}}\sqrt{\frac{N}{N-M}}, 2 \right)$-block-encoding of $P_{\pm}$.
\end{proof}

Now it becomes straightforward how to simulate $\exp(-iGt)$.
\begin{lemma} \label{thm:Gsimulation}
	Let $G$ be defined as in equation \eqref{eq:Gdef}.
	Then, we can prepare a $(1, 8, \epsilon)$-block-encoding of $\exp(-iGt)$ with $\bigO(\polylog(N) \log(1 / \epsilon))$ calls to the controlled and inverse versions of $O_K$ and $O_H$ and primitive two-qubit gates.
\end{lemma}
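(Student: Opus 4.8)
The plan is to start from the spectral decomposition \eqref{eq:Gdecomposition}. The structural observation that drives everything is that $\ket{\Psi_+}$ and $\ket{\Psi_-}$ are orthogonal (Lemma \ref{thm:spectral_decomposition}), so the two rank-one contributions in \eqref{eq:Gdecomposition} can be applied \emph{sequentially} rather than summed. Since $\dyad{\Psi_+}\dyad{\Psi_-} = 0$, I can factor
\[
  e^{-iGt} = \prod_{s=\pm}\left( I + (e^{-i\lambda_s t} - 1)\dyad{\Psi_s}\right),
\]
and each factor is a \emph{selective phase rotation} that multiplies $\ket{\Psi_s}$ by $e^{-i\lambda_s t}$ (a number I compute classically from the known values of $M$ and $N$) and fixes the orthogonal complement. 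This product form is what avoids a multiplicative overhead: a direct linear combination of $I$, $\dyad{\Psi_+}$, $\dyad{\Psi_-}$ (Corollary \ref{thm:LCU_blocks}) would yield a subnormalization $\Vert\vb y\Vert_1 = \bigO(1) > 1$ that must then be stripped off by extra amplitude amplification, whereas the product of two genuine unitaries is already a $(1,\cdot)$-block-encoding.

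First I would upgrade the preparations of Lemma \ref{thm:PMOperators} to near-unitary state preparations. That lemma gives a $(\polylog N, 2)$-block-encoding of $P_\pm$, i.e.\ a circuit outputting $\ket{0^2}\ket{\Psi_\pm}$ with amplitude $a = 1/\polylog N$ on the flag $\ket{0^2}$. Applying fixed-point amplitude amplification (Theorem \ref{thm:AA}) with $\Pi = \dyad{0^2}\otimes I_n$ and $\delta$ just below $a$ yields a unitary $\tilde P_\pm$ on $n+3$ qubits with $\Vert \tilde P_\pm \ket{0^3}\ket{0^n} - \ket{0^3}\ket{\Psi_\pm}\Vert \le \epsilon_1$, at a cost of $\bigO(\delta^{-1}\log(1/\epsilon_1)) = \bigO(\polylog N \,\log(1/\epsilon_1))$ calls to the block-encoding of $P_\pm$, hence the same order of calls to $O_K$ and $O_H$. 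This is the cost-dominant step.

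Next I would realise each factor as $F_s := \tilde P_s\, R_s\, \tilde P_s^{\dagger}$, where $R_s$ applies the phase $e^{-i\lambda_s t}$ conditioned on the full register reading $\ket{0^3}\ket{0^n}$ and acts trivially otherwise; $R_s$ is a multi-controlled phase gate costing $\bigO(n)$ two-qubit gates and no net ancillas. Conjugation sends this phase to the state $\tilde P_s\ket{0^{3+n}}\approx\ket{0^3}\ket{\Psi_s}$, so, restricted to the ancillas reading $\ket{0^3}$, $F_s$ is a $(1,3)$-block-encoding of $I + (e^{-i\lambda_s t}-1)\dyad{\Psi_s}$. Because each factor is (approximately) unitary with subnormalization one, the composition $F_+ F_-$ is a $(1,\bigO(1))$-block-encoding of the product, i.e.\ of $e^{-iGt}$; allocating separate ancilla registers for the two factors keeps the count at $\le 8$, matching the claimed $(1,8,\epsilon)$ form.

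Finally I would discharge the error accounting. The only inexactness comes from $\tilde P_s$; propagating $\epsilon_1$ through the conjugation $\tilde P_s R_s \tilde P_s^\dagger$ and then through the product of the two factors incurs only a constant blow-up, so choosing $\epsilon_1 = \bigO(\epsilon)$ makes $F_+F_-$ a $(1,8,\epsilon)$-block-encoding of $\exp(-iGt)$, with total cost $\bigO(\polylog N\,\log(1/\epsilon))$. I expect the main obstacle to be precisely this last error analysis: verifying that replacing the ideal reflections about $\ket{\Psi_s}$ by their amplitude-amplified approximations, and then composing them, keeps the block-encoding error additive and controlled. The orthogonality of $\ket{\Psi_+}$ and $\ket{\Psi_-}$ is exactly what makes the two rotations interact cleanly and prevents the error from degrading under composition.
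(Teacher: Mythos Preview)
Your proposal is correct, and it takes a genuinely different route from the paper. The paper does \emph{not} amplify $P_\pm$ first: it plugs the raw $(\polylog N,2)$-block-encodings $U_\pm$ into an explicit five-ancilla circuit (two $U_\pm^{\dagger}$/$U_\pm$ conjugations sandwiching a controlled $R_Z$) that block-encodes $e^{-i\lambda_+t}\dyad{\Psi_+}+e^{-i\lambda_-t}\dyad{\Psi_-}$ with subnormalization $\beta=\tfrac12\bigl(1+\sqrt{N/(N-M)}\bigr)^2$; a second, nearly identical circuit block-encodes $\dyad{\Psi_+}+\dyad{\Psi_-}$; these two and the identity are combined by LCU into a $(2\beta+1,7)$-block-encoding of $e^{-iGt}$, and only then is fixed-point amplitude amplification applied once to reach $(1,8,\epsilon)$. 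Your approach instead exploits the product factorisation $e^{-iGt}=\prod_{s}\bigl(I+(e^{-i\lambda_s t}-1)\dyad{\Psi_s}\bigr)$, which the paper never writes down, and pushes the amplitude amplification \emph{inside} to upgrade each $P_s$ to a near-exact state preparation $\tilde P_s$; the selective phase $\tilde P_s R_s \tilde P_s^{\dagger}$ is then already a $(1,3,\bigO(\epsilon_1))$-block-encoding, so the product of the two needs no further LCU or amplification. Both approaches land at the same asymptotic cost $\bigO(\polylog(N)\log(1/\epsilon))$; yours is conceptually cleaner (no LCU, no explicit circuit verification) and even uses fewer ancillas, while the paper's approach keeps amplitude amplification as a single black-box post-processing step and avoids having to argue that the approximate $\tilde P_s$ still yields a well-behaved conjugated phase. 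Your error analysis sketch is sound: $\Vert\dyad{\chi_s}-\dyad{0^3,\Psi_s}\Vert\le 2\epsilon_1$ gives $\Vert F_s-F_s^{\mathrm{ideal}}\Vert\le 4\epsilon_1$, and composing two such factors (on separate ancillas, via the standard product-of-block-encodings lemma) keeps the error additive.
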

\begin{proof}
	Consider the following circuit. The ``$a$'', ``$-$'', and ``$+$'' are labels of the ancilla registers. 
	$U_{\pm}$ are $(\polylog N, 2)$-block-encoding of $P_{\pm}$, as per Lemma \ref{thm:PMOperators}, and the corresponding ancilla qubits are $\pm$.
	\begin{center}
		\leavevmode
		\Qcircuit @C=1.em @R=1.em {
		\lstick{\ket{0}_a}  &\qw                     &\targ     &\gate{R_Z(-2\lambda_+t)}&\qw                     &\targ     &\gate{X}          &\qw\\
		\lstick{\ket{0^2}_-}&\qw                     &\qw       &\qw                         &\multigate{2}{U_-^{\dg}}&\ctrlo{-1}&\multigate{2}{U_-}&\qw\\
		\lstick{\ket{0^2}_+}&\multigate{1}{U_+^{\dg}}&\ctrlo{-2}&\multigate{1}{U_+}          &\ghost{U_-^{\dg}}       &\qw       &\ghost{U_-}       &\qw\\
		                    &\ghost{U_+^{\dg}}       &\ctrlo{-3}&\ghost{U_+}                 &\ghost{U_-^{\dg}}       &\ctrlo{-3}&\ghost{U_-}       &\qw
		}
	\end{center}
	In \hyperref[proof:circuitaction]{Appendix \ref{sec:spectral_proof}} we show by direct inspection that this circuit acts as a $\left(\beta, 5\right)$-bock-encoding of
	\begin{equation}
		e^{- i \lambda_+ t} \dyad{\Psi_+} + e^{- i \lambda_- t} \dyad{\Psi_-}
	\end{equation}
	for 
	\begin{equation}
		\beta = \frac{1}{2}\left(1 + \sqrt{\frac{N}{N-M}}\right)^2.
	\end{equation}
	For the same reason, the circuit
	\begin{center}
		\leavevmode
		\Qcircuit @C=1.em @R=1.em {
		\lstick{\ket{0}_a}  &\qw                     &\targ     &\qw               &\qw                     &\targ     &\qw               &\gate{X}&\qw\\
		\lstick{\ket{0^2}_-}&\qw                     &\qw       &\qw               &\multigate{2}{U_-^{\dg}}&\ctrlo{-1}&\multigate{2}{U_-}&\qw     &\qw\\
		\lstick{\ket{0^2}_+}&\multigate{1}{U_+^{\dg}}&\ctrlo{-2}&\multigate{1}{U_+}&\ghost{U_-^{\dg}}       &\qw       &\ghost{U_-}       &\qw     &\qw\\
		                    &\ghost{U_+^{\dg}}       &\ctrlo{-3}&\ghost{U_+}       &\ghost{U_-^{\dg}}       &\ctrlo{-3}&\ghost{U_-}       &\qw     &\qw
		}
	\end{center}
	is a $\left(\beta, 5\right)$-bock-encoding of
	\begin{equation}
		\dyad{\Psi_+} +  \dyad{\Psi_-}.
	\end{equation}
	Finally, the empty circuit is a $\left(1, 5\right)$-bock-encoding of the identity, $I$.

	Let us denote these three circuits as $U_1$, $U_2$, and $U_I$.
	By equation \eqref{eq:Gdecomposition}, we want to implement
	\begin{equation}
		\beta U_1 - \beta U_2 + U_I,
	\end{equation}
	which we can do with Linear Combination of Unitaries.
	From Lemma \ref{thm:LCU}, we get a $(2 \beta + 1, 7)$-block-encoding of $\exp(-i G t)$ with $\bigO(1)$ calls to the controlled and inverse versions of $O_K$ and $O_H$ plus $\bigO(\polylog(N))$ two-qubit gates.

	Finally, we can amplify the probability of measuring the right block with amplitude amplification.
	From Corollary \ref{thm:AA_blocks}, we can prepare a $(1, 8, \epsilon)$-block-encoding of $\exp(-i G t)$ with $\bigO(\beta \log(1 / \epsilon))$ calls to the stated gates.
\end{proof}

\section{Hub-sparse networks in the interaction picture \label{sec:IntPicture}}

We have seen that we can efficiently implement $\exp(-i G t)$.
Therefore, we can quickly rotate to the interaction picture -- equation \eqref{eq:HamIntPicture}.
We are left with the task of solving the Schr\"{o}dinger equation with the time-dependent Hamiltonian $e^{i G t} \left( -A_- + A_h + A_r \right) e^{-i G t}$.

As we have discussed in the \hyperref[sec:pre_intpicture]{Preliminaries} section, Low and Wiebe have solved this problem by approximating the evolution operator with a truncated Dyson series \cite{IntPicture}.
According to Theorem \ref{thm:intpicture}, all that is left for us to do is to ensure that there is an efficient block-encoding of 
\begin{equation} \label{eq:discretizedAi}
	\sum_{d=0}^{D-1} \dyad{d}  \otimes 
	\left( e^{i G \frac{d}{D} \tau} A_i e^{-i G \frac{d}{D} \tau} \right)
	, \text{ for } i = -, h, r.
\end{equation}
We show that this is indeed the case in two steps.
First, for $i = -, h, r$ we show that $A_i$ is efficiently block-encodable.
Then, we show that we can build a block-encoding of the operator \eqref{eq:discretizedAi} from a block-encoding of $A_i$.

\begin{lemma} \label{thm:block_encoding_Ai}
	We can prepare $(\polylog N, n + 4)$-block-encodings of $A_-$, $A_h$, and $A_r$ with $\bigO(1)$ calls to $O_A$, $O_L$, $O_H$, $O_Z$, $O_K$, and $\bigO(\polylog N)$ primitive two-qubit gates.
\end{lemma}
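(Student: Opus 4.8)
The plan is to reduce the lemma to the standard construction of a block-encoding from sparse oracle access. Recall that if a Hermitian matrix $B$ is $d$-row-sparse with $\Vert B \Vert_{\max} \le 1$ and we have (i) a ``list'' oracle returning the column index of the $l$-th nonzero entry of any given row and (ii) an entry oracle, then the usual diffusion-plus-swap circuit from the sparse Hamiltonian simulation literature \cite{Gilyen2019} yields a $(d, \bigO(n))$-block-encoding of $B$ with $\bigO(1)$ oracle calls and $\bigO(\polylog N)$ extra gates. Since $A_-$, $A_h$, $A_r$ are symmetric and binary, $\Vert A_i \Vert_{\max} = 1$, so it suffices to show that each is $\polylog N$-row-sparse and to furnish a list oracle \emph{specialized} to each matrix, built from $O_A$, $O_L$, $O_H$, $O_Z$, $O_K$ with only a $\polylog N$ overhead. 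The normalization $d = \polylog N$ then matches the claimed subnormalization, and the ancilla count $n + 4$ accommodates the $n$-qubit column register of the swap step together with a constant number of flag qubits.

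I would dispatch the easy cases first. For $A_r$ (regular--regular links) every nonzero row is indexed by a regular node, which by Definition \ref{def:hub-sparse} has degree at most $s = \polylog N$, so $A_r$ is $s$-sparse; to enumerate the neighbors of a regular row $i$ that are themselves regular, I would walk the adjacency list of $i$ with $O_L$ and discard the hub neighbors using $O_K$, flagging the row as empty whenever $i$ is a hub. For $A_h$ (hub--hub links) the support is confined to the $M \times M$ hub block with $M = \polylog N$, hence $A_h$ is $(M-1)$-sparse; I would enumerate the hub neighbors of a hub row $i$ by listing all hubs with $O_H$ and testing each candidate link with $O_A$, again zeroing out regular rows via $O_K$. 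Both specialized list oracles cost $\bigO(\polylog N)$ oracle calls.

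The delicate case, and the main obstacle, is $A_-$. Writing $A_- = G \odot (J - A)$ entrywise, where $J$ is the all-ones matrix, a nonzero entry of $A_-$ occurs exactly where one endpoint is a hub, the other is regular, and the link is \emph{absent}. For a hub row $i$, Definition \ref{def:hub-sparse} guarantees at most $h = \polylog N$ missing links in total, so the row is $h$-sparse, and its nonzero columns are precisely the regular-node zeros of row $i$ of $A$, which the oracle $O_Z$ of Lemma \ref{thm:zeros_oracle} enumerates directly (after filtering out hub columns with $O_K$). For a regular row $j$, however, $O_Z$ does not apply, and the nonzero entries are exactly the hubs that fail to connect to $j$; since there are only $M = \polylog N$ hubs, I would enumerate them with $O_H$ and retain those with $A_{\cdot j} = 0$ via $O_A$, giving an $M$-sparse row. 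Branching on the node type of the input row with $O_K$ therefore assembles a single list oracle for $A_-$ that is $\max(h, M)$-sparse at a cost of $\bigO(\polylog N)$ oracle calls.

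Finally I would feed each specialized list oracle, together with $O_A$, into the standard sparse block-encoding circuit to obtain $(\polylog N, n + 4)$-block-encodings of $A_-$, $A_h$, and $A_r$. Symmetry of the three matrices is what lets row-computability alone suffice, with no separate column oracle required, and the only bookkeeping is to confirm that the filtering and type-branching consume a constant number of flag ancillas beyond the $n$-qubit swap register. I expect the substantive correctness check to be verifying that the specialized oracles enumerate exactly the intended nonzero patterns---in particular that the Hadamard-product structure of $A_-$ is faithfully reproduced by the hub/regular case split---rather than any quantitative estimate.
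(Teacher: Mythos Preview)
Your overall strategy matches the paper's: adapt the standard sparse block-encoding circuit (superposition over candidate columns, entry check, SWAP, uncompute) to each of $A_h$, $A_r$, $A_-$ by supplying appropriate column-generation and entry-verification subroutines. The paper writes out the three circuits explicitly in Appendix~\ref{sec:encodingAs}, and your descriptions for $A_h$ and $A_r$ are essentially those circuits.

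There is, however, a quantitative gap. You propose to first build \emph{filtered} list oracles (e.g., one returning the $l$-th \emph{regular} neighbour of $i$, or the $l$-th non-connected hub of a regular row), which as you yourself state costs $\bigO(\polylog N)$ calls to $O_L$, $O_K$, $O_H$, etc., and only then invoke the black-box sparse construction. This proves a weaker statement than the lemma, which asserts $\bigO(1)$ oracle calls. The paper avoids that overhead by \emph{not} compacting the lists: it uses the raw oracles ($O_L$ for $A_r$, $O_H$ for $A_h$, $O_Z$ for $A_-$) to create a superposition over a \emph{superset} of candidate columns, and folds the type-filtering into the entry-check step---for instance, computing the bit $A_{ij} \wedge [\text{$i$ regular}] \wedge [\text{$j$ regular}]$ in place with a Toffoli on the outputs of one call to $O_A$ and two calls to $O_K$. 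That is a single pass through each oracle. Your hub/regular case-split for $A_-$ is a perfectly reasonable alternative to the paper's uniform use of $O_Z$, but to meet the $\bigO(1)$ bound the filtering must likewise be absorbed into the entry check rather than preprocessed into a compacted list oracle.
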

\begin{proof}
In Appendix \ref{sec:encodingAs} we explicitly describe $\polylog(N)$-sized circuits that block-encode $A_-$, $A_h$, and $A_r$.
The idea is to adapt a standard construction of block-encodings sparse matrices (\cite[Chapter~6]{LinLin}, for example) to our situation. 
For $A_h$ and $A_r$ we include a step that checks if we are linking hubs to hubs and regular nodes to regular nodes, respectively.
For $A_-$ we use the operator $O_Z$ to switch the roles of the ``0''s and ``1''s in the usual technique for block-encoding sparse matrices.
For that case, we also add a routine that imposes that we are transitioning from a hub to a regular node or vice-versa.
\end{proof}

\begin{lemma} \label{thm:time_blockencodings}
	Let $U_i$ be a $(\alpha_i, m)$-block-encoding of $A_i$.
	Then, we can prepare a $(\alpha_i, m + 16, \epsilon)$-block-encoding of the operator in equation \eqref{eq:discretizedAi} with 
	\begin{equation}
		\bigO(\log(D) \polylog(N)  \log(1 / \epsilon))
	\end{equation}
	calls to $U_i$, to the controlled and inversed versions of $O_H$ and $O_K$, and to primitive two-qubit gates.
\end{lemma}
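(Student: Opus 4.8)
The plan is to realize the operator in \eqref{eq:discretizedAi} as a \emph{conjugation} of the given block-encoding $U_i$ of $A_i$ by a pair of controlled, $d$-dependent fast-forwarding circuits for $G$. Concretely, I would build a $d$-controlled $(1,8,\epsilon_G)$-block-encoding $V_-$ of $\sum_d \dyad{d}\otimes e^{-iG\frac{d}{D}\tau}$ together with a companion $V_+$ of $\sum_d \dyad{d}\otimes e^{+iG\frac{d}{D}\tau}$, and then form the product $V_+\cdot(I_d\otimes U_i)\cdot V_-$. Since $U_i$ acts trivially on the time register and as $A_i/\alpha_i$ on its all-zero ancilla subspace, restricting all three factors to their zero ancillas yields exactly $\sum_d \dyad{d}\otimes\big(e^{iG\frac{d}{D}\tau}A_i e^{-iG\frac{d}{D}\tau}\big)$. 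The ancilla budget is additive, $8+m+8=m+16$, and, because $V_\pm$ are normalization-$1$ block-encodings, the composite normalization is $1\cdot\alpha_i\cdot 1=\alpha_i$, matching the statement.

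The main work is building the controlled circuits $V_\pm$, and here I would lean on the spectral structure of Lemma \ref{thm:spectral_decomposition}. Because $G$ has only the two nonzero eigenvalues $\lambda_\pm=\sqrt{M(N-M)}$, equation \eqref{eq:Gdecomposition} shows that simulating $e^{-iG\frac{d}{D}\tau}$ reduces to applying the $d$-dependent phases $e^{-i\lambda_\pm\frac{d}{D}\tau}$ to the prepared eigenvectors $\ket{\Psi_\pm}$. I would therefore take the circuit of Lemma \ref{thm:Gsimulation} verbatim and replace each fixed phase rotation by a phase linear in the contents of the $d$-register: writing $d$ in its $\lceil\log D\rceil$ bits, the rotation $e^{-i\lambda_\pm\frac{\tau}{D}d}$ decomposes into one controlled-phase gate per bit, with angles proportional to $\lambda_\pm\frac{\tau}{D}2^b$, so the $d$-control costs only $\bigO(\log D)$ two-qubit gates and \emph{no} extra oracle calls. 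The crucial point — and the step I expect to be the real obstacle — is that everything else in the circuit, namely the state preparations $P_\pm$ of Lemma \ref{thm:PMOperators} and the reflections of the subsequent fixed-point amplification, is independent of the evolution time; hence it can be applied uniformly rather than controlled on $d$. This is exactly what lets a single controlled circuit replace $D$ separate ones: the geometry of the block-encoding (the subspaces $\ket{\Psi_\pm}$ and the success amplitude $\beta=\tfrac12(1+\sqrt{N/(N-M)})^2=\bigO(1)$, using $M=\bigO(\polylog N)$) does not move as $t$ varies, so the amplifying reflections of Corollary \ref{thm:AA_blocks} need not know $d$ and terminate after the same $\bigO(\log(1/\epsilon_G))$ rounds for every $d$.

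With $V_\pm$ in hand the bookkeeping is routine. By the standard product rule for block-encodings (the composite error scales with the intervening normalizations, here with the exact middle factor $U_i$ of norm $\alpha_i$), the composite error is $\bigO(\alpha_i\epsilon_G)$, so I would set $\epsilon_G=\bigO(\epsilon/\alpha_i)$; since $\alpha_i=\polylog N$ this only inflates $\log(1/\epsilon_G)$ to $\bigO(\polylog N+\log(1/\epsilon))$. Each $V_\pm$ then costs $\bigO\big(\log(1/\epsilon_G)(\polylog N+\log D)\big)$ gates and oracle calls — the $\polylog N$ coming from $P_\pm$ and the $\log D$ from the phase gradient, repeated over the amplification rounds — while $U_i$ itself is invoked only $\bigO(1)$ times. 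Collecting the bounds gives the advertised $\bigO(\log(D)\,\polylog(N)\,\log(1/\epsilon))$ total count of calls to $U_i$, to the controlled and inverse versions of $O_H$ and $O_K$, and to primitive two-qubit gates.
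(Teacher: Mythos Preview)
Your proposal is correct and reaches the same conclusion as the paper, but via a genuinely different route in how the $d$-controlled propagator is built. Both approaches finish the same way, conjugating $U_i$ by a $(1,8,\cdot)$-block-encoding of $\sum_d \dyad{d}\otimes e^{-iG\frac{d}{D}\tau}$ and its adjoint. The paper, however, constructs that controlled propagator at the \emph{outer} level: it takes the fully amplified block-encoding $\mathcal{G}_t$ of $e^{-iGt}$ from Lemma~\ref{thm:Gsimulation} as a black box and applies it at the $\log D$ times $\frac{\tau}{D}2^b$, each controlled on bit $b$ of $d$, bounding the accumulated error by a telescoping triangle inequality. You instead open up the circuit of Lemma~\ref{thm:Gsimulation} and push the $d$-dependence down to the single $R_Z$ rotation, replacing it by a $\log D$-gate phase gradient, and then run fixed-point amplification \emph{once} on the resulting $d$-controlled pre-amplification block. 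This is sound precisely because, as you observe, the success amplitude $\beta$ and the reflection subspaces are independent of $t$, so the same amplification schedule works uniformly across all $d$-sectors.

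Your version exploits more structure and is mildly tighter: the cost of $V_\pm$ is $\bigO\big((\polylog N+\log D)\log(1/\epsilon_G)\big)$, additive in $\log D$ and $\polylog N$, whereas the paper's $C\mathcal{G}$ costs $\bigO\big(\log D\cdot\polylog N\cdot\log(\log D/\epsilon)\big)$, multiplicative. Both fit under the stated bound. The paper's approach buys modularity: it would work unchanged for any fast-forwardable $G$ with a black-box simulator, while yours relies on the specific rank-two spectral decomposition of Lemma~\ref{thm:spectral_decomposition}.
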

\begin{proof}
	We adapt a construction from \cite[Theorem~7]{LowChuang19}.
	Let $\mathcal{G}_t$ be a $\left(1, 8,\frac{\epsilon}{2\log D}\right)$-block-encoding of $\exp(-i G t)$ -- from Theorem \ref{thm:Gsimulation} we can prepare this with $\bigO(\polylog(N) \log(1 / \epsilon))$ calls to the controlled and inverse versions of $O_K$ and $O_H$ and to two-qubit gates (independently of $t$).
	Now let $C \mathcal{G}$ be the unitary implemented by the following circuit.
	\begin{center}
		\leavevmode
		\Qcircuit @C=1.em @R=.8em {
			                  &\qw                                           &\qw                                           &\qw                                           &\qw&\push{\rule{.2em}{0em}\ldots\rule{.5em}{0em}}&\ctrl{7}                                               &\qw\\
                          &                                              &\vdots                                        &                                              &   &                                             &                                                       &   \\
                          &                                              &                                              &                                              &   &                                             &                                                       &   \\
                          &\qw                                           &\qw                                           &\ctrl{4}                                      &\qw&\push{\rule{.2em}{0em}\ldots\rule{.5em}{0em}}&\qw                                                    &\qw\\
                          &\qw                                           &\ctrl{3}                                      &\qw                                           &\qw&\push{\rule{.2em}{0em}\ldots\rule{.5em}{0em}}&\qw                                                    &\qw\\
                          &\ctrl{2}                                      &\qw                                           &\qw                                           &\qw&\push{\rule{.2em}{0em}\ldots\rule{.5em}{0em}}&\qw                                                    &\qw\\
                          &                                              &                                              &                                              &   &                                             &                                                       &   \\
        \lstick{\ket{0^8}}&\multigate{1}{\mathcal{G}_{\frac{\tau}{D}2^0}}&\multigate{1}{\mathcal{G}_{\frac{\tau}{D}2^1}}&\multigate{1}{\mathcal{G}_{\frac{\tau}{D}2^2}}&\qw&\push{\rule{.2em}{0em}\ldots\rule{.5em}{0em}}&\multigate{1}{\mathcal{G}_{\frac{\tau}{D}2^{\log D-1}}}&\qw\\
                          &\ghost{\mathcal{G}_{\frac{\tau}{D}2^0}}       &\ghost{\mathcal{G}_{\frac{\tau}{D}2^1}}       &\ghost{\mathcal{G}_{\frac{\tau}{D}2^2}}       &\qw&\push{\rule{.2em}{0em}\ldots\rule{.5em}{0em}}&\ghost{\mathcal{G}_{\frac{\tau}{D}2^{\log D-1}}}       &\qw
		}
	\end{center}
	If the first $\log D$ qubits are in state $\ket{d} \equiv \ket{d_{\log D - 1} \ldots d_2 d_1 d_0}$, then the last $n$ qubits are transformed by
	\begin{equation}
		\mathcal{G}_{\frac{\tau}{D}2^0 d_0} 
		\mathcal{G}_{\frac{\tau}{D}2^1 d_1} 
		\mathcal{G}_{\frac{\tau}{D}2^2 d_2} 
		\mathcal{G}_{\frac{\tau}{D}2^{\log D - 1} d_{\log D - 1}} .
	\end{equation}
	Since $\mathcal{G}_t$ differs from $\exp(-i G t)$ by at most $\epsilon / (2 \log D)$, we can guarantee (by a telescopic triangle inequality) that
	\begin{gather}
		\left\Vert
		\mathcal{G}_{\frac{\tau}{D}2^0 d_0} 
		\mathcal{G}_{\frac{\tau}{D}2^1 d_1} 
		\mathcal{G}_{\frac{\tau}{D}2^2 d_2} 
		\mathcal{G}_{\frac{\tau}{D}2^{\log D - 1} d_{\log D - 1}} 
		-
		e^{-i G \frac{d}{D} \tau}
		\right\Vert
		\leq \frac{\epsilon }{2}
		\\
		\Leftrightarrow
		\left\Vert 
		\expval{C\mathcal{G}}{0^8} - 
		\sum_{d=0}^{D-1} \dyad{d}  \otimes  e^{-i G \frac{d}{D} \tau} 
		\right\Vert 
		\leq \frac{\epsilon }{2}
	\end{gather}
	Now we want to get the product of the matrices encoded by $C \mathcal{G}^{\dg}$, $U_i$, and $C \mathcal{G}$,
	\begin{equation}
		\left\Vert 
		\expval{C\mathcal{G}^{\dg}}{0^8} \expval{U_i}{0^m} \expval{C\mathcal{G}}{0^8} - 
		\sum_{d=0}^{D-1} \dyad{d}  \otimes 
		\frac{ e^{i G \frac{d}{D} \tau} A_i e^{-i G \frac{d}{D} \tau} }{\alpha_i}
		\right\Vert 
		\leq \epsilon.
	\end{equation}
	For that purpose, we just need to treat the ancilla qubits separately (as in \cite[Lemma~53]{Gilyen2019}), ending up with an $(\alpha_i, 8 + m + 8, \epsilon)$-block-encoding.
\end{proof}

We finally have all the necessary ingredients to conclude the proof of Theorem \ref{thm:main}.

\begin{proof}[Proof of Theorem \ref{thm:main}]
We apply Low and Wiebe's \cite{IntPicture} result on Hamiltonian simulation in the interaction picture to our problem, making use of the results we have proved so far.
We wish to set $H_1 = G$ and $H_2 = -A_- +A_h + A_r$ in Lemma \ref{thm:intpicture}.
In this case, we have that $\alpha_1 = \bigO(\sqrt{N} \polylog N)$ (Lemma \ref{thm:spectral_decomposition}) and $\alpha_2 = \bigO(\polylog N)$ ($H_2$ is the sum of three $\polylog (N)$-sparse matrices).

From Lemma \ref{thm:Gsimulation}, we can prepare a $\left(1, m, \bigO\left(\frac{\epsilon}{\alpha_2 t}\right) \right)$-block-encoding of $\exp(-i G \tau)$ in time $\bigO(\log (\alpha_2 t / \epsilon) \polylog (N))$.
By Theorem \ref{thm:intpicture}, this operator is called $\bigO(\alpha_2 t)$ times, yielding a total complexity of
\begin{equation}
	\bigO \left( \alpha_2 t \times \log (\alpha_2 t / \epsilon) \polylog N \right)
	= \bigO \left( t \, \log(t / \epsilon) \, \polylog (N)\right ).
\end{equation}

From Lemmas \ref{thm:block_encoding_Ai} and \ref{thm:time_blockencodings}, we can prepare $(\polylog N, n + 20, \epsilon')$ block-encodings the operators in equation \eqref{eq:discretizedAi} with $\bigO(\log(D)\log(1 / \epsilon') \polylog(N)  )$ gates.
We can then use Linear Combination of Unitaries (Corollary \ref{thm:LCU_blocks}) to prepare a $(\polylog N, n + 22, 3 \epsilon')$-block-encoding of
\begin{equation} 
	\sum_{d=0}^{D-1} \dyad{d}  \otimes 
	\left( e^{i G \frac{d}{D} \tau} (-A_- + A_h + A_r)e^{-i G \frac{d}{D} \tau} \right)
\end{equation}
with essentially the same resources.
This is the $U_{\tau, D}$ operator from Theorem \ref{thm:intpicture}.
We need to call this operator $\bigO \left( \alpha_2 t \frac{\log(\alpha_2 t / \epsilon)}{\log \log(\alpha_2 t / \epsilon)} \right)$ times.
The total complexity from calls to $U_{\tau, D}$ is then
\begin{equation}
	\bigO \left( 
	\alpha_2 t \frac{\log(\alpha_2 t / \epsilon)}{\log \log(\alpha_2 t / \epsilon)}
	\times 
	\log(D)\log(1 / \epsilon') \polylog(N)
	\right)
	= \bigO \left( t \, \frac{\log^3(t / \epsilon)}{\log \log(t / \epsilon)} \, \polylog(N) \right),
\end{equation}
where we have set $\epsilon' = \bigO\left(\frac{\epsilon}{\alpha_2 t} \frac{\log\log(\alpha_2 t/\epsilon)}{\log(\alpha_2 t/\epsilon)}\right)$ and $D = \bigO\left(\frac{t}{\epsilon}(\alpha_1 + \alpha_2) \right)$, as per Theorem \ref{thm:intpicture}.

\end{proof}

\section{Discussion \label{sec:discussion}}

In this work, we have introduced hub-sparse networks (Definition \ref{def:hub-sparse}).
These are a generalization of sparse networks, allowing for a polylogarithmic number of hubs.
We have shown that such networks are also efficiently simulatable (in the sense of Definition \ref{def:Hamsimulation}), thus extending the previous literature of quantum algorithms.
We believe that this constitutes an important step in applying quantum computing to the study of complex networks.

A point worth discussing is the input model.
We have assumed access to an oracle to the adjacency matrix, $O_A$, an oracle to the adjacency list, $O_L$, and an oracle to the location of the hubs, $O_H$.
The final result was expressed in terms of calls to these oracles, as well as primitive two-qubit gates.
A natural question is what would be the cost of actually building each of these oracles.
For ``artificial'' networks, such as physical Hamiltonians, we have an analytical description of the graph.
This means that we have closed expressions for $A_{ij}$, $r(l,i)$, and $h(l)$ (\textit{cf.} Section \ref{sec:input_model}).
We can build circuits that evaluate these functions with a size that is independent of the size of the network.
In contrast, networks based on real world data need to be pre-processed into a suitable data structure.
A common representation in network science is the adjacency (or edge) list model, with the nodes ordered by degree \cite{NetworkMeasures}.
If we can quickly access the entries of the adjacency list of any given node, then we can also quickly determine $A_{ij}$, $r(l,i)$, and $h(l)$.
To get the oracles $O_A$, $O_L$, and $O_H$ we just require that this access is coherent, that is, that we can access the adjacency list data in quantum superposition.
Such could be achieved with a quantum random access memory (qRAM), a device that would load classical data in coherent superposition in time logarithmic in the number of memory cells \cite{QRAM}.
We should nevertheless point out that, even though there have been proposals of physical architectures for implementing qRAM \cite{QRAM_architectures,Park_Petruccione_Rhee_2019}, there are still significant challenges to overcome before such a device can be practically realized \cite{Matteo_Gheorghiu_Mosca_2020}.

A possible caveat to the feasiblity of our input model is that the lists of neighbours of the hub nodes have lengths close to $N$.
Therefore, storing them into memory would take $\bigO(N)$ time per hub, even if we could later access their entries quickly.
Fortunately, our algorithm does necessarily require an oracle acting as $O_L$ at the hub nodes. 
Instead, for these nodes we just need to identify the non-connections.
Therefore, we could assume an input model that, besides access to $O_A$, $O_H$, and $O_K$, demands an oracle $O_L$ that acts as \eqref{eq:ListOracle} when $i$ is a regular node, an oracle $O_Z$ that acts as \eqref{eq:zeros_oracle} when $i$ is a hub.
This modifications would imply no change to our algorithm, and have the advantage that all the underlying adjacency lists have $\polylog(N)$ size.

Despite our progress, our model remains an oversimplification of real world complex networks. Our proof was critically dependent on the assumption that each node either had $\polylog (N)$ connections or $\polylog (N)$ missing connections.
Although the presence of hubs is a recurrent organizational principle of complex networks, we would not expect to find a hub-sparse structure in many real world networks. Instead, we expect a continuous transition between the degree of the least connected nodes to the largest hubs in the network, and it remains open the question of whether such networks can be efficiently simulated on a quantum computer.
Standard models for the degree distribution of complex networks include the scale-free model or the log-normal model.
Scale-free networks are networks where the degrees of the most connected nodes tend to follow a power-law distribution ~\cite{doi:10.1126/science.286.5439.509,barabasi2016network}, and this model has dominated the network science literature for almost 20 years.
In a recent paper it was proposed that a log-normal distribution might be a better fit for the degree distribution of most real-world complex networks~\cite{Park_Petruccione_Rhee_2019}. 
However, simple variations of the original scale-free model are also a good fit~\cite{Park_Petruccione_Rhee_2019}.  
Previous results have shown that it is not possible to simulate a general $N \times N$ Hamiltonian in $\polylog N$ time \cite{Childs_Kothari}, implying that efficient simulation algorithms must exploit strong structural properties in the system. As such, a natural next step for complex network simulation, and a considerably more challenging problem, would be to consider a network simulation algorithm directly exploiting a power-law or log-normal degree distribution. Developing an algorithm capable of efficiently simulating such a network, or proving that it does not exist, is an open and important question to be addressed in future work.

\section*{Acknowledgments}

We would like to thank Leonardo Novo and Yasser Omar for useful suggestions and discussions.
The authors thank Funda\c{c}\~{a}o para a Ci\^{e}ncia e a Tecnologia (Portugal) for its support through the project UIDB/EEA/50008/2020. JPM and DM acknowledge the support of Fundação para a Ciência e a Tecnologia (FCT, Portugal) through scholarships 2019.144151.BD and 2020.04677.BD, respectively. BC acknowledges the support of FCT through project CEECINST/00117/2018\linebreak[0]/CP1495/CT0001.

\bibliographystyle{unsrtnat}
\bibliography{main}

\appendix

\newcommand\overmat[2]{%
  \makebox[0pt][l]{$\smash{\overbrace{\phantom{%
    \begin{matrix}-\lambda&-\lambda&-\lambda&\ldots&-\lambda\end{matrix}}}^{\text{$#1$}}}$}#2}
\newcommand\overmatx[2]{%
  \makebox[0pt][l]{$\smash{\overbrace{\phantom{%
    \begin{matrix}0&-\lambda&-\lambda&\ldots&-\lambda\end{matrix}}}^{\text{$#1$}}}$}#2}
\newcommand\overmatxx[2]{%
  \makebox[0pt][l]{$\smash{\overbrace{\phantom{%
    \begin{matrix}-\lambda&-\lambda&\ldots&-\lambda\end{matrix}}}^{\text{$#1$}}}$}#2}
\newcommand\overmatxxx[2]{%
  \makebox[0pt][l]{$\smash{\overbrace{\phantom{%
    \begin{matrix}-\lambda&-\lambda&-\lambda&\ldots&-\lambda\end{matrix}}}^{\text{$#1$}}}$}#2}

\section{Spectral decomposition of the $G$ matrix \label{sec:spectral_proof}}

\begin{proof}[Proof of Lemma \ref{thm:spectral_decomposition}]
	Let $D_N(\lambda)$ denote the characteristic polynomial of the matrix $G$, assuming that there are $N$ nodes and $M$ hubs.
	Recall that, up to a sign, the determinant is invariant under the exchange of columns/rows.
	Then, we can write
	\vspace{1.5em}
	\begin{equation}
	D_N(\lambda) = 
	\begin{vmatrix}
	\overmat{M}{- \lambda & 0 & 0 &\ldots & 0} &  \overmat{N-M}{1 & 1 & 1 & \ldots & 1 } \\
	0 & - \lambda & 0 & \ldots & 0    & 1 & 1 & 1 & \ldots & 1  \\
	0 & 0 & - \lambda & \ldots & 0 & 1 & 1 & 1 & \ldots & 1   \\
	\vdots & \vdots & \vdots & \ddots & \vdots & \vdots & \vdots & \vdots & \ddots & \vdots  \\
	0 & 0 & 0 & \ldots & -\lambda & 1 & 1 & 1 & \ldots & 1  \\
	\\
	1 & 1 & 1 & \ldots & 1 & -\lambda & 0 & 0 & \ldots & 0  \\
	1 & 1 & 1 & \ldots & 1 & 0 & -\lambda & 0 & \ldots & 0 \\
	1 & 1 & 1 & \ldots & 1 & 0 & 0 & -\lambda & \ldots & 0 \\
	\vdots & \vdots & \vdots & \ddots & \vdots & \vdots & \vdots & \vdots & \ddots & 0  \\
	1 & 1 & 1 & \ldots & 1 & 0 & 0 & 0 & \ldots & -\lambda 
	\end{vmatrix}
	\hspace{-1.em}
	\begin{tabular}{l}
	$\left.\lefteqn{\phantom{\begin{matrix} 0\\0\\0\\\vdots\\0 \end{matrix}}}\right\}M$
	\\ \\
	$\left.\lefteqn{\phantom{\begin{matrix} 0\\0\\0\\\vdots\\0 \end{matrix}}} \right\}N-M$
	\end{tabular}
	\end{equation}
	Performing a Laplace expansion on the $(M+1)$-th column, we find
	\begin{equation} \label{eq:Dequation}
		D_N(\lambda) =
		(-1)^M \, M \, F_{N-1}(\lambda) -\lambda D_{N-1}(\lambda),
	\end{equation}
	where 
	\vspace{1.5em}
	\begin{equation} \label{eq:Fmatrix}
		F_{N-1}(\lambda) = \hspace{5pt}
		\begin{vmatrix}
			\overmatx{M}{0 & - \lambda & 0 & \ldots & 0}  & \overmatxx{N-1-M}{1 & 1 & \ldots & 1} \\
			0 & 0 & - \lambda & \ldots & 0  & 1 & 1 & \ldots & 1 \\
			\vdots & \vdots & \vdots & \ddots  & \vdots & \vdots & \vdots & \ddots & \vdots\\
			0 & 0 & 0 & \ldots & -\lambda  & 1 & 1 & \ldots & 1  \\
			\\
			1 & 1 & 1 & \ldots & 1 & 0 & 0 & \ldots & 0 \\
			1 & 1 & 1 & \ldots & 1 & -\lambda & 0 & \ldots & 0 \\
			1 & 1 & 1 & \ldots & 1 & 0 & -\lambda & \ldots & 0 \\
			\vdots & \vdots & \vdots & \ddots & \vdots & \vdots & \vdots & \ddots & 0 \\
			1 & 1 & 1 & \ldots & 1 & 0 & 0 & \ldots & -\lambda 
		\end{vmatrix}
		\hspace{-1.em}
		\begin{tabular}{l}
		$\left.\lefteqn{\phantom{\begin{matrix} 0\\0\\\vdots\\0 \end{matrix}}}\right\}M-1$
		\\ \\
		$\left.\lefteqn{\phantom{\begin{matrix} 0\\0\\0\\\vdots\\0 \end{matrix}}} \right\}N-M$
		\end{tabular}
	\end{equation}
	Expanding along the $(M+1)$-th column of the matrix in \eqref{eq:Fmatrix}, we conclude that
	\begin{align}
		F_{N-1}(\lambda) 
		& = -\lambda F_{N-2}(\lambda)
		\label{eq:Finicial} \\ \nonumber \\
		&= (-\lambda)^{N-1-M} \times
		\begin{vmatrix}
			\overmatx{M}{0 & - \lambda & 0 & \ldots & 0} \\
			0 & 0 & - \lambda & \ldots & 0              \\
			\vdots & \vdots & \vdots & \ddots  & \vdots\\
			0 & 0 & 0 & \ldots & -\lambda                \\
			1 & 1 & 1 & \ldots & 1 
		\end{vmatrix}
		\hspace{-1.em}
		\begin{tabular}{l}
		$\left.\lefteqn{\phantom{\begin{matrix} 0\\0\\\vdots\\0\\1 \end{matrix}}}\right\}M$
		\end{tabular}
		\label{eq:Fintermidiate}
		\\
		&=(-\lambda)^{N-1-M} \times \lambda^{M-1} \label{eq:Ffinal} ,
	\end{align}
	where we have applied equation \eqref{eq:Finicial} recursively $N-1-M$ times to reach line \eqref{eq:Fintermidiate} and we have expanded the determinant along the first column of the matrix of line \eqref{eq:Fintermidiate} to get to line \eqref{eq:Ffinal}.

	Combining this into equation \eqref{eq:Dequation}, we get
	\begin{align}
		D_N(\lambda) &=
		(-1)^{N-1} \, M \, \lambda^{N-2} -\lambda D_{N-1}(\lambda) \label{eq:Dinit}
		\\ \nonumber \\
		&= (-1)^{N-1} \, M (N-M) \, \lambda^{N-2} + (-\lambda)^{N-M} \,
		\begin{vmatrix}
			 \overmatxxx{M}{- \lambda & 0 & 0 & \ldots & 0} \\
			 0 & - \lambda & 0 & \ldots & 0              \\
			 0 & 0 & - \lambda & \ldots & 0             \\
			 \vdots & \vdots & \vdots & \ddots  & \vdots \\
			 0 & 0 & 0 & \ldots & -\lambda                \\
		\end{vmatrix}
		\hspace{-1.em}
		\begin{tabular}{l}
		$\left.\lefteqn{\phantom{\begin{matrix} 0\\0\\0\\\vdots\\0 \end{matrix}}}\right\}M$
		\end{tabular}
		\label{eq:Dintermediate}
		\\
		&= (-1)^{N-1} \, M (N-M) \, \lambda^{N-2} + (-\lambda)^{N-M} \, (-\lambda)^M
		\\
		& = (-1)^N \lambda^{N-2} (-M(N-M) + \lambda^2),
	\end{align}
	where we have applied equation \eqref{eq:Dinit} recursively $N-M$ times to move to line \eqref{eq:Dintermediate}.
\end{proof}

\begin{proof}[Proof of Lemma \ref{thm:Gsimulation} \label{proof:circuitaction}] 
	Let $U$ be the unitary describing the action of the entire circuit.
	We analyse $\bra{0^5} U \ket{0^5}$ in the diagonal basis and show that it behaves as expected.
	\begin{itemize}
		\item $\bra{\Psi_{\pm}} \bra{0^5} U \ket{0^5} \ket{\Psi_{\pm}}$.
		\\ 
		Let $\alpha=\polylog N$ be the block-encoding factor of $U_{\pm}$.
		First, note that
		\begin{equation}
			U_+ \ket{0^2}_+ \ket{0^n} =: \alpha^{-1} \ket{0^2}_+ \ket{\Psi_+} + \sqrt{1 - \alpha^{-2}} \ket{\perp_+} 
		\end{equation}
		from some $\ket{\perp_+}$ such that $\bra{\perp_+} \ket{0^2}_+ = 0 $ and $\bra{\perp_+}\ket{\perp_+} = 1$.
		Also,
		\begin{equation}
			U_+^{\dg} \ket{\perp_+} =: \sqrt{1 - \alpha^{-2}}  \ket{0^2}_+ \ket{0^n} - \alpha^{-1} \ket{\perp'_+} 
		\end{equation}
		for some  $\ket{\perp'_+}$ such that $\bra{\perp'_+} \ket{0^2}_+ = 0 $ and $\bra{\perp'_+}\ket{\perp'_+} = 1$.
		Then, writing $\theta = -\lambda_+ t$,
		\begin{align}
			\ket{0^2}_+ \ket{\Psi_+} \ket{0}_a
			\xrightarrow{U_+^{\dg}} &
			\alpha \left( \ket{0^2}_+ \ket{0^n} - \sqrt{1-\alpha^{-2}} U_+^{\dg} \ket{\perp_+} \right) \ket{0}_a \\ 
			= & \alpha^{-1} \ket{0^2}_+ \ket{0^n} \ket{0}_a + \sqrt{1-\alpha^{-2}}\ket{\perp'_+} \ket{0}_a \\
			\xrightarrow{\text{CCX} \cdot R_Z(2 \theta)} &
			e^{i 2 \theta} \alpha^{-1} \ket{0^2}_+ \ket{0^n} \ket{1}_a + e^{-i 2 \theta} \sqrt{1-\alpha^{-2}}  \ket{\perp'_+} \ket{0}_a \\
			\xrightarrow{U_+} &
			e^{i  \theta} \alpha^{-1} \left( \alpha^{-1} \ket{0^2}_+ \ket{\Psi_+} + \sqrt{1-\alpha^{-2}} \ket{\perp_+}\right) \ket{1}_a \\
			&+ e^{-i  \theta} \sqrt{1-\alpha^{-2}}  U_+ \ket{\perp'_+} \ket{0}_a \\
			= &
			e^{i  \theta} \alpha^{-1} \left( \alpha^{-1} \ket{0^2}_+ \ket{\Psi_+} + \sqrt{1-\alpha^{-2}} \ket{\perp_+}\right) \ket{1}_a \\
			&+ e^{-i  \theta} \sqrt{1-\alpha^{-2}} \left( \sqrt{1-\alpha^{-2}} \ket{0^2}_+ \ket{\Psi_+} - \alpha^{-1} \ket{\perp_+}\right) \ket{0}_a .
		\end{align}
		The second part of the circuit is not going to affect the $+$ register.
		So, only the part of the state with $\ket{0^2}_+$ at this point will contribute to the block encoding, that is,
		\begin{equation}
			\ket{0^2}_+ \ket{\Psi_+} \left(e^{i \theta} \alpha^{-2} \ket{1}_a + e^{-i \theta} (1-\alpha^{-2}) \ket{0}_a \right).
		\end{equation}
		Now notice that
		\begin{equation}
		 	\bra{0^2}_- \bra{0^n} U_-^{\dg} \ket{0^2}_- \ket{\Psi_+} = 0,
		\end{equation}
		because $\bra{\Psi_-} \ket{\Psi_+} = 0$.
		Therefore, the second CCX operator does not affect this part of the state and the $U_-$ operator will just undo the effect of $U_-^{\dg}$.
		We conclude that
		\begin{equation}
			\bra{0}_a \bra{0}_- \bra{0}_+ \bra{\Psi_+} U \ket{0}_+ \ket{\Psi_+} \ket{0}_- \ket{0}_a 
			= e^{i \theta} \alpha^{-2} 
		\end{equation}
		
		The case for $\ket{\Psi_-}$ is similar, but the $a$ register reaches the $R_Z$ gate in state $\ket{0}$, and so we get a $- \theta$ phase.

		\item $\bra{\phi} \bra{0^5} U \ket{0^5} \ket{\phi_0}$, for any eigenvectors $\phi, \phi_0$ such that $G \ket{\phi_0} = 0$.
		\\ We start by noting that 
		\begin{equation}
			\bra{0^n} \bra{0^2}_+ U_+^{\dg} \ket{0^2} \ket{\phi_0}
			=  \left( \alpha^{-1} \bra{0^2}_+ \bra{\Psi_+} + \sqrt{1 - \alpha^{-2}} \bra{\perp_+} \right) \ket{0^2} \ket{\phi_0} = 0.
		\end{equation}
		So, the controlled-controlled-X gate has no effect and the $U_+$ operator just undoes the action of $U_+^{\dg}$.
		The same reasoning applies to the second part of the circuit.
		We conclude that the $a$ register ends up in state $\ket{1}_a$  with probability $1$, and so $\bra{0}_a U \ket{0^5} \ket{\phi_0} = 0$. 
		In particular, it follows that $\bra{\phi} \bra{0^5} U \ket{0^5} \ket{\phi_0} = 0$.
	\end{itemize}
\end{proof}

\section{Block-encoding the matrices $A_-$, $A_h$, and $A_r$ \label{sec:encodingAs}}

\begin{proof}[Proof of Lemma \ref{thm:block_encoding_Ai}]

	For each of these matrices, we write down a block-encoding circuit and show that it acts as desired.

	\begin{itemize}

		\item Block-encoding circuit for $A_h$:
		\begin{center}
			\leavevmode
			\Qcircuit @C=1.em @R=1.em {
			\lstick{\ket{0}_1}  &\qw                      &\qw       &\qw               &\qw               &\targ    &\qw               &\qw               &\qw                       &\qw       &\gate{X}                 &\qw \\
			\lstick{\ket{0}_2}  &\qw                      &\qw       &\qw               &\multigate{3}{O_K}&\ctrl{-1}&\multigate{3}{O_K}&\qw               &\qw                       &\qw       &\qw                      &\qw \\
			\lstick{\ket{0}_3}  &\qw                      &\qw       &\multigate{2}{O_A}&\ghost{O_K}       &\ctrl{-2}&\ghost{O_K}       &\multigate{2}{O_A}&\qw                       &\qw       &\qw                      &\qw \\
			\lstick{\ket{0^n}_4}&\gate{H^{\otimes \log M}}&\gate{O_H}&\ghost{O_A}       &\ghost{O_K}       &\qw      &\ghost{O_K}       &\ghost{O_A}       &\multigate{1}{\text{SWAP}}&\gate{O_H}&\gate{H^{\otimes \log M}}&\qw \\
			\lstick{\ket{j}_5}  &\qw                      &\qw       &\ghost{O_A}       &\ghost{O_K}       &\qw      &\ghost{O_K}       &\ghost{O_A}       &\ghost{\text{SWAP}}       &\qw       &\qw                      &\qw 
			}
		\end{center}
		where register ``5'' has $n$ qubits and $O_K$ is acting on the registers ``2'' and ``5''.

		If $U$ is the unitary implemented by this circuit, we want to show that 
		\begin{equation}
			\bra{0^{n + 3}}_{1,2,3,4} \bra{i}_5 U \ket{0^{n + 3}}_{1,2,3,4}  \ket{j}_5 \propto A_h \vert_{i,j}.
		\end{equation}
		First, observe that
		\begin{align}
			\ket{0}_1 \ket{0}_2 \ket{0}_3 \ket{0^n}_4 \ket{j}_5
			\xrightarrow{H^{\otimes \log M}}
			& \ket{0}_1 \ket{0}_2 \ket{0}_3 \sum_{l=0}^{M-1} \frac{\ket{l}_4}{\sqrt{M}} \ket{j}_5
			\\
			\xrightarrow{O_H}
			& \ket{0}_1 \ket{0}_2 \ket{0}_3 \sum_{l=0}^{M-1} \frac{\ket{h(l)}_4}{\sqrt{M}} \ket{j}_5
			\\
			\xrightarrow{O_A}
			& \ket{0}_1 \ket{0}_2  \sum_{l=0}^{M-1} \ket{A_{h(l), j}}_3 \frac{\ket{h(l)}_4}{\sqrt{M}} \ket{j}_5
			\\
			\xrightarrow{O_K}
			& \ket{0}_1 \ket{j \text{ hub?}}_2 \sum_{l=0}^{M-1}   \ket{A_{h(l), j}}_3 \frac{\ket{h(l)}_4}{\sqrt{M}} \ket{j}_5
			\\
			\xrightarrow{\text{Toffoli}}
			& \sum_{l=0}^{M-1}  \ket{A_h \vert_{h(l), j}}_1 \ket{j \text{ hub?}}_2   \ket{A_{h(l), j}}_3 \frac{\ket{h(l)}_4}{\sqrt{M}} \ket{j}_5
			\\
			\xrightarrow{O_K}
			& \sum_{l=0}^{M-1}  \ket{A_h \vert_{h(l), j}}_1 \ket{0}_2   \ket{A_{h(l), j}}_3 \frac{\ket{h(l)}_4}{\sqrt{M}} \ket{j}_5
			\\
			\xrightarrow{O_A}
			& \sum_{l=0}^{M-1}  \ket{A_h \vert_{h(l), j}}_1 \ket{0}_2   \ket{0}_3 \frac{\ket{h(l)}_4}{\sqrt{M}} \ket{j}_5.
		\end{align}
		Since we are only interested in the final state when all ancilla qubits are the zero state, we may consider the action of the final three gates on $\ket{0^{n + 3}}_{1,2,3,4} \ket{i}_5$,
		\begin{align}
				\ket{0}_1 \ket{0}_2 \ket{0}_3 \ket{0^n}_4 \ket{i}_5
				\xrightarrow{X} &
				\ket{1}_1 \ket{0}_2 \ket{0}_3 \ket{0^n}_4 \ket{i}_5
				\\
				\xrightarrow{H^{\otimes \log M}}
				& \ket{1}_1 \ket{0}_2 \ket{0}_3 \sum_{l'=0}^{M-1} \frac{\ket{l'}_4}{\sqrt{M}} \ket{i}_5
				\\
				\xrightarrow{O_H}
				& \ket{1}_1 \ket{0}_2 \ket{0}_3 \sum_{l'=0}^{M-1} \frac{\ket{h(l')}_4}{\sqrt{M}} \ket{i}_5
				\\
				\xrightarrow{\text{SWAP}}
				& \ket{1}_1 \ket{0}_2 \ket{0}_3 \sum_{l'=0}^{M-1} \ket{i}_4 \frac{\ket{h(l')}_5}{\sqrt{M}} .
			\end{align}
		Then, the inner product yields (where $U$ is the unitary implemented by this circuit)
		\begin{align}
			\bra{0^{n + 3}}_{1,2,3,4} \bra{i}_5 U \ket{0^{n + 3}}_{1,2,3,4}  \ket{j}_5 \label{eq:sumdeltas_h}
			& = \frac{1}{M} \sum_{l, l'} A_h \vert_{h(l), j} \delta_{i, h(l)} \delta_{j, h(l')}
			\\ &= \frac{1}{M} A_h \vert_{i, j} .
		\end{align}
		We have used that, if $A_h \vert_{i, j} \neq 0$, there is a unique $l$ such that $h(l)=i$ and that, in that case, there is also a unique $l'$ such that $h(l')=j$.
		If $A_h \vert_{i, j} = 0$, there are no such $l$ and $l'$, and so \eqref{eq:sumdeltas_h} sums to zero.

		\item Block-encoding circuit for $A_r$:
		\begin{center}
			\leavevmode
			\Qcircuit @C=0.7em @R=1.em {
			\lstick{\ket{0}_1}  &\qw                      &\qw               &\qw               &\qw               &\qw               &\targ     &\qw               &\qw               &\qw               &\qw                       &\qw               &\gate{X}                 &\qw \\
			\lstick{\ket{0}_2}  &\qw                      &\qw               &\qw               &\qw               &\multigate{3}{O_K}&\ctrlo{-1}&\multigate{3}{O_K}&\qw               &\qw               &\qw                       &\qw               &\qw                      &\qw \\
			\lstick{\ket{0}_3}  &\qw                      &\qw               &\qw               &\multigate{3}{O_K}&\ghost{O_K}       &\ctrlo{-2}&\ghost{O_K}       &\multigate{3}{O_K}&\qw               &\qw                       &\qw               &\qw                      &\qw \\
			\lstick{\ket{0}_4}  &\qw                      &\qw               &\multigate{2}{O_A}&\ghost{O_K}       &\ghost{O_K}       &\ctrl{-3} &\ghost{O_K}       &\ghost{O_K}       &\multigate{2}{O_A}&\qw                       &\qw               &\qw                      &\qw \\
			\lstick{\ket{0^n}_5}&\gate{H^{\otimes \log s}}&\multigate{1}{O_L}&\ghost{O_A}       &\ghost{O_K}       &\ghost{O_K}       &\qw       &\ghost{O_K}       &\ghost{O_K}       &\ghost{O_A}       &\multigate{1}{\text{SWAP}}&\multigate{1}{O_L}&\gate{H^{\otimes \log s}}&\qw \\
			\lstick{\ket{j}_6}  &\qw                      &\ghost{O_L}       &\ghost{O_A}       &\ghost{O_K}       &\qw               &\qw       &\qw               &\ghost{O_K}       &\ghost{O_A}       &\ghost{\text{SWAP}}       &\ghost{O_L}       &\qw                      &\qw 
			}
		\end{center}
		where register ``6'' has $n$ qubits, the first and fourth $O_K$ are acting on registers ``3'' and ``6'', and the second and third $O_K$ are acting on registers $2$ and $6$.

		Repeating a calculation similar to the previous case,
		\begin{align}
			\ket{0}_1 \ket{0}_2 \ket{0}_3 \ket{0}_4 \ket{0^n}_5 \ket{j}_6
			\xrightarrow{H^{\otimes \log s}}
			& \ket{0}_1 \ket{0}_2 \ket{0}_3 \ket{0}_4 \sum_{l=0}^{s-1} \frac{\ket{l}_5}{\sqrt{s}} \ket{j}_6
			\\
			\xrightarrow{O_L}
			& \ket{0}_1 \ket{0}_2 \ket{0}_3 \ket{0}_4 \sum_{l=0}^{s-1} \frac{\ket{r(l,j)}_5}{\sqrt{s}} \ket{j}_6
			\\
			\xrightarrow{O_A}
			& \ket{0}_1 \ket{0}_2 \ket{0}_3 \sum_{l=0}^{s-1} \ket{A_{r(l,j), j}}_4 \frac{\ket{r(l,i)}_5}{\sqrt{s}} \ket{j}_6
			\\
			\xrightarrow{O_K \cdot O_K}
			& \ket{0}_1 \sum_{l=0}^{s-1} \ket{r(l,j) \text{ hub?}}_2 \ket{j \text{ hub?}}_3  \ket{A_{r(l,j), j}}_4 \frac{\ket{r(l,i)}_5}{\sqrt{s}} \ket{j}_6
			\\
			\xrightarrow{\text{``Toffoli''}}
			& \sum_{l=0}^{s-1}\ket{A_r \vert_{r(l,j) , j}}_1  \ket{r(l,j) \text{ hub?}}_2 \ket{j \text{ hub?}}_3  \ket{A_{r(l,j), j}}_4 \frac{\ket{r(l,i)}_5}{\sqrt{s}} \ket{j}_6
			\\
			\xrightarrow{O_K \cdot O_K}
			& \sum_{l=0}^{s-1}\ket{A_r \vert_{r(l,j) , j}}_1  \ket{0}_2 \ket{0}_3  \ket{A_{r(l,j), j}}_4 \frac{\ket{r(l,i)}_5}{\sqrt{s}} \ket{j}_6
			\\
			\xrightarrow{O_A}
			& \sum_{l=0}^{s-1}\ket{A_r \vert_{r(l,j) , j}}_1  \ket{0}_2 \ket{0}_3  \ket{0}_4 \frac{\ket{r(l,i)}_5}{\sqrt{s}} \ket{j}_6
		\end{align}
		The action of the remaining gates on state $\ket{0^{n + 4}}_{1,2,3,4,5} \ket{i}_6$ is
		\begin{align}
			\ket{0}_1 \ket{0}_2 \ket{0}_3 \ket{0}_4 \ket{0^n}_5 \ket{i}_6
			\xrightarrow{X}&
			\ket{1}_1 \ket{0}_2 \ket{0}_3 \ket{0}_4 \ket{0^n}_5 \ket{i}_6
			\\
			\xrightarrow{H^{\otimes \log s}}
			& \ket{1}_1 \ket{0}_2 \ket{0}_3  \ket{0}_4 \sum_{l'=0}^{s-1} \frac{\ket{l'}_5}{\sqrt{s}} \ket{i}_6
			\\
			\xrightarrow{O_H}
			& \ket{1}_1 \ket{0}_2 \ket{0}_3  \ket{0}_4 \sum_{l'=0}^{s-1} \frac{\ket{r(l',i)}_5}{\sqrt{s}} \ket{i}_6
			\\
			\xrightarrow{\text{SWAP}}
			& \ket{1}_1 \ket{0}_2 \ket{0}_3  \ket{0}_4 \sum_{l'=0}^{s-1} \ket{i}_5 \frac{\ket{r(l',i)}_6}{\sqrt{s}}.
		\end{align}
		Then, the inner product gives
		\begin{align}
			\bra{0^{n + 4}}_{1,2,3,4,5} \bra{i}_6 U \ket{0^{n + 4}}_{1,2,3,4,5}  \ket{j}_6 \label{eq:sumdeltas_r}
			& = \frac{1}{s} \sum_{l, l'} A_r \vert_{r(l,j) , j} \delta_{i, r(l,j)} \delta_{j, r(l',i)}
			\\ &= \frac{1}{s} A_r \vert_{i, j} .
		\end{align}
		We have used that, if $A_r \vert_{i, j} \neq 0$, there is a unique $l$ such that $r(l,j)=i$ and that, in that case, there is also a unique $l'$ such that $r(l',i)=j$.
		If $A_r \vert_{i, j} = 0$, there are no such $l$ and $l'$, and so \eqref{eq:sumdeltas_r} sums to zero.
	 
		\item Block-encoding circuit for $A_-$:
		\begin{center}
			\leavevmode
			\Qcircuit @C=0.6em @R=1.em {
			\lstick{\ket{0}_1}  &\qw                      &\qw               &\qw               &\qw               &\qw               &\qw      &\targ     &\qw      &\qw               &\qw               &\qw               &\qw                       &\qw               &\qw                      &\qw \\
			\lstick{\ket{0}_2}  &\qw                      &\qw               &\qw               &\qw               &\multigate{3}{O_K}&\targ    &\ctrl{-1} &\targ    &\multigate{3}{O_K}&\qw               &\qw               &\qw                       &\qw               &\qw                      &\qw \\
			\lstick{\ket{0}_3}  &\qw                      &\qw               &\qw               &\multigate{3}{O_K}&\ghost{O_K}       &\ctrl{-1}&\qw       &\ctrl{-1}&\ghost{O_K}       &\multigate{3}{O_K}&\qw               &\qw                       &\qw               &\qw                      &\qw \\
			\lstick{\ket{0}_4}  &\qw                      &\qw               &\multigate{2}{O_A}&\ghost{O_K}       &\ghost{O_K}       &\qw      &\ctrlo{-3}&\qw      &\ghost{O_K}       &\ghost{O_K}       &\multigate{2}{O_A}&\qw                       &\qw               &\qw                      &\qw \\
			\lstick{\ket{0^n}_5}&\gate{H^{\otimes \log h}}&\multigate{1}{O_Z}&\ghost{O_A}       &\ghost{O_K}       &\ghost{O_K}       &\qw      &\qw       &\qw      &\ghost{O_K}       &\ghost{O_K}       &\ghost{O_A}       &\multigate{1}{\text{SWAP}}&\multigate{1}{O_Z}&\gate{H^{\otimes \log h}}&\qw \\
			\lstick{\ket{j}_6}  &\qw                      &\ghost{O_Z}       &\ghost{O_A}       &\ghost{O_K}       &\qw               &\qw      &\qw       &\qw      &\qw               &\ghost{O_K}       &\ghost{O_A}       &\ghost{\text{SWAP}}       &\ghost{O_Z}       &\qw                      &\qw 
			}
		\end{center}
		where register ``6'' has $n$ qubits, the first and fourth $O_K$ are acting on registers ``3'' and ``6'', and the second and third $O_K$ are acting on registers $2$ and $6$.

		First, 
		\begin{align}
			\ket{0}_1 \ket{0}_2 \ket{0}_3 \ket{0}_4 \ket{0^n}_5 \ket{j}_6
			\xrightarrow{H^{\otimes \log h}}
			& \ket{0}_1 \ket{0}_2 \ket{0}_3 \ket{0}_4 \sum_{l=0}^{h-1} \frac{\ket{l}_5}{\sqrt{h}} \ket{j}_6
			\\
			\xrightarrow{O_Z}
			& \ket{0}_1 \ket{0}_2 \ket{0}_3 \ket{0}_4 \sum_{l=0}^{h-1} \frac{\ket{q(l,j)}_5}{\sqrt{h}} \ket{j}_6
			\\
			\xrightarrow{O_A}
			& \ket{0}_1 \ket{0}_2 \ket{0}_3 \sum_{l=0}^{h-1} \ket{A_{q(l,j), j}}_4 \frac{\ket{q(l,i)}_5}{\sqrt{h}} \ket{j}_6
			\\
			\xrightarrow{O_K \cdot O_K}
			& \ket{0}_1 \sum_{l=0}^{h-1} \ket{q(l,j) \text{ hub?}}_2 \ket{j \text{ hub?}}_3  \ket{A_{q(l,j), j}}_4 \frac{\ket{r(l,i)}_5}{\sqrt{h}} \ket{j}_6
			\\
			\xrightarrow{\text{CNOT}}
			& \ket{0}_1 \sum_{l=0}^{h-1} \ket{\left(j \text{ hub?} \right) \text{ XOR } \left(q(l,j) \text{ hub?}\right)}_2 \ket{j \text{ hub?}}_3  \\ & \ket{A_{q(l,j), j}}_4 \frac{\ket{q(l,i)}_5}{\sqrt{h}} \ket{j}_6
			\\
			\xrightarrow{\text{``Toffoli''}}
			& \sum_{l=0}^{h-1}\ket{A_- \vert_{q(l,j) , j}}_1  \ket{\left(j \text{ hub?} \right) \text{ XOR } \left(q(l,j) \text{ hub?}\right)}_2 \ket{j \text{ hub?}}_3 \\&  \ket{A_{q(l,j), j}}_4 \frac{\ket{q(l,i)}_5}{\sqrt{h}} \ket{j}_6
			\\
			\xrightarrow{\text{CNOT}}
			& \sum_{l=0}^{h-1}\ket{A_- \vert_{q(l,j) , j}}_1  \ket{r(l,j) \text{ hub?}}_2 \ket{j \text{ hub?}}_3  \\ &  \ket{A_{q(l,j), j}}_4 \frac{\ket{q(l,i)}_5}{\sqrt{h}} \ket{j}_6
			\\
			\xrightarrow{O_K \cdot O_K}
			& \sum_{l=0}^{h-1}\ket{A_- \vert_{q(l,j) , j}}_1  \ket{0}_2 \ket{0}_3  \ket{A_{q(l,j), j}}_4 \frac{\ket{q(l,i)}_5}{\sqrt{h}} \ket{j}_6
			\\
			\xrightarrow{O_A}
			& \sum_{l=0}^{h-1}\ket{A_- \vert_{q(l,j) , j}}_1  \ket{0}_2 \ket{0}_3  \ket{0}_4 \frac{\ket{q(l,i)}_5}{\sqrt{h}} \ket{j}_6
		\end{align}
		Second, 
			\begin{align}
				\ket{0}_1 \ket{0}_2 \ket{0}_3 \ket{0}_4 \ket{0^n}_5 \ket{i}_6
				\xrightarrow{X}&
				\ket{1}_1 \ket{0}_2 \ket{0}_3 \ket{0}_4 \ket{0^n}_5 \ket{i}_6
				\\
				\xrightarrow{H^{\otimes \log h}}
				& \ket{1}_1 \ket{0}_2 \ket{0}_3  \ket{0}_4 \sum_{l'=0}^{h-1} \frac{\ket{l'}_5}{\sqrt{h}} \ket{i}_6
				\\
				\xrightarrow{O_Z}
				& \ket{1}_1 \ket{0}_2 \ket{0}_3  \ket{0}_4 \sum_{l'=0}^{h-1} \frac{\ket{q(l',i)}_5}{\sqrt{h}} \ket{i}_6
				\\
				\xrightarrow{\text{SWAP}}
				& \ket{1}_1 \ket{0}_2 \ket{0}_3  \ket{0}_4 \sum_{l'=0}^{h-1} \ket{i}_5 \frac{\ket{q(l',i)}_6}{\sqrt{h}}.
			\end{align}
		Finally, if $U$ is the unitary implemented by this circuit,
		\begin{align}
			\bra{0^{n + 4}}_{1,2,3,4,5} \bra{i}_6 U \ket{0^{n + 4}}_{1,2,3,4,5}  \ket{j}_6 \label{eq:sumdeltas_minus}
			& = \frac{1}{h} \sum_{l, l'} A_- \vert_{q(l,j) , j} \delta_{i, q(l,j)} \delta_{j, q(l',i)}
			\\ &= \frac{1}{h} A_- \vert_{i, j} .
		\end{align}
		We have used that, if $A_- \vert_{i, j} \neq 0$, there is a unique $l$ such that $q(l,j)=i$ and that, in that case, there is also a unique $l'$ such that $q(l',i)=j$.
		If $A_- \vert_{i, j} = 0$, there are no such $l$ and $l'$, and so \eqref{eq:sumdeltas_minus} sums to zero.
	\end{itemize}
\end{proof}

\end{document}